\title{Graph Threading} 
\author{Erik D. {Demaine}}{Computer Science and Artificial Intelligence Lab, Massachusetts Institute of Technology, USA \and \url{https://erikdemaine.org/}}{edemaine@mit.edu}{https://orcid.org/0000-0003-3803-5703}{}
\author{Yael {Kirkpatrick}}{Department of Mathematics, Massachusetts Institute of Technology, USA \and \url{https://yaelkirk.github.io/}}{yaelkirk@mit.edu}{https://orcid.org/0009-0007-6718-7390}{NSF Graduate Research Fellowship under Grant No. 2141064.}
\author{Rebecca {Lin}}{Computer Science and Artificial Intelligence Lab, Massachusetts Institute of Technology, USA \and \url{https://rebeccayelin.github.io/}}{ryelin@mit.edu}{https://orcid.org/0000-0003-4747-4978}{MIT Stata Family Presidential Fellowship.}
\authorrunning{Erik D. Demaine, Yael Kirkpatrick, and Rebecca Lin} 
\keywords{Shortest walk, Eulerian tour, perfect matching, deployable structures, beading} 
\newcommand{\sect}[1]{Section~\ref{#1}}
\newcommand{\eqn}[1]{Equation~\ref{#1}}
\newcommand{\fig}[1]{Figure~\ref{#1}}
\newcommand{\cstr}[1]{(\ref{#1})}
\newcommand{\cfourstar}{(\ref{c4s})}
\newcommand{\threading}{\textsc{Optimal Threading}}
\newcommand{\ip}{\textsc{Optimal Local Threading}}
\newcommand{\double}{\textsc{Double Threading}}
\newcommand{\defn}[1]{\textbf{\textit{\boldmath #1}}}
\newcommand{\set}[1]{\left\{ #1 \right\}}
\renewcommand{\subset}{\subseteq}
\begin{document}

\maketitle

\begin{abstract}
Inspired by artistic practices such as beadwork and himmeli, we study the problem of \emph{threading} a single string through a set of tubes, so that pulling the string forms a desired graph. More precisely, given a connected graph (where edges represent tubes and vertices represent junctions where they meet), we give a polynomial-time algorithm to find a minimum-length closed walk (representing a threading of string) that induces a connected graph of string at every junction. The algorithm is based on a surprising reduction to minimum-weight perfect matching. Along the way, we give tight worst-case bounds on the length of the optimal threading and on the maximum number of times this threading can visit a single edge. We also give more efficient solutions to two special cases: cubic graphs and the case when each edge can be visited at most twice.
\end{abstract}

\section{Introduction}
\label{sec:introduction}

Various forms of art and craft combine tubes together by threading cord through them to create a myriad of shapes, patterns, and intricate geometric structures.
In beadwork~\cite{Green_2018}, artists string together beads with thread or wire. In traditional ``straw mobile'' crafts~\cite{straw-wiki} --- from the Finnish and Swedish holiday traditions of himmeli \cite{Finnish_2013,Jackson_2021} to the Polish folk art of paj\c{a}ki \cite{pajaki-nyt} --- mobile decorations are made by binding straws together with string. Artist Alison Martin has shown experiments where bamboo connected by strings automatically forms polyhedral structures by pulling the strings with a weight \cite{Martin_2021}.

\begin{figure*}[ht]
    \centering \includegraphics[width=\textwidth]{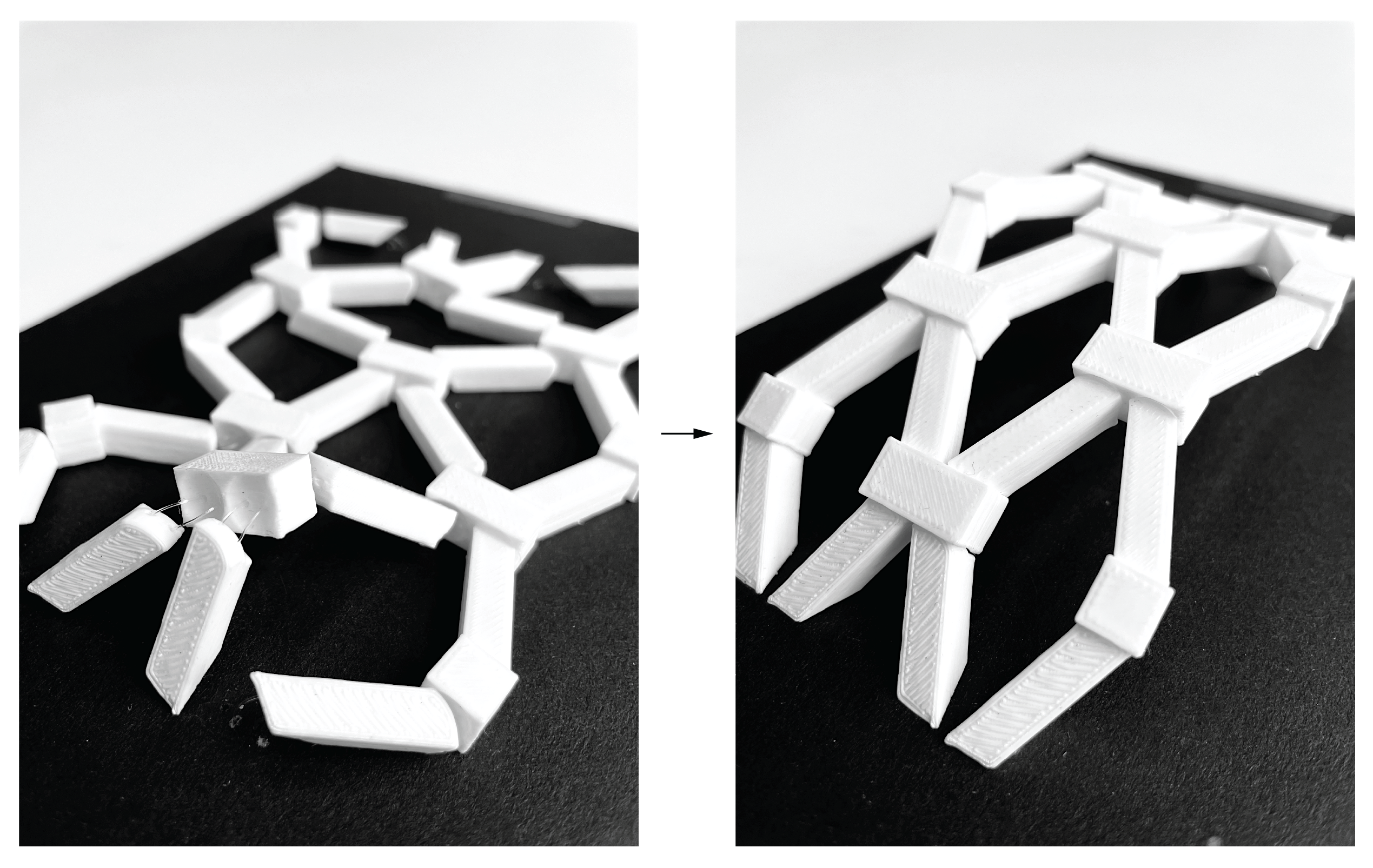}
      \caption{\label{fig:deployable-structure} 
      A deployable structure made from disconnected 3D-printed elements (white) connected by string, which automatically shifts between soft (left) and rigid (right) states by pulling on the endpoints of the string beneath the platform (black). This design was developed by the third author in collaboration with Tomohiro Tachi.}
\end{figure*}

For engineering structures, these techniques offer a promising mechanism for constructing reconfigurable or deployable structures, capable of transforming between distinct geometric configurations: a collection of tubes, loosely woven, can be stored in compact configurations and then swiftly deployed into desired target geometric forms, such as polyhedra, by merely pulling a string taut. Figure~\ref{fig:deployable-structure} shows a prototype of such a structure, illustrating the potential of this approach. The popular ``push puppet'' toy, originally invented by Walther Kourt Walss in Switzerland in 1926 \cite{push-puppets}, also embodies this mechanism.

In contrast to related work~\cite{beady, Jin2019BeadSA}, we study a \textit{theoretical} formulation of these ideas: threading a single string through a collection of tubes to mimic the connectivity of a given graph; refer to Figure~\ref{fig:junction-graphs}.
Consider a connected graph $G=(V, E)$ with minimum vertex degree 2, where each edge $e \in E$ represents a tube and each vertex $v \in V$ represents the junction of tubes incident to $v$.
A \defn{graph threading} $T$ of $G$ is a closed walk through $G$ that visits every edge at least once, induces connected ``junction graphs'', and has no ``U-turns''.
The \defn{junction graph} $J(v)$ of a vertex $v$ induced by a closed walk has a vertex for each tube incident to $v$, and has an edge between two vertices/tubes every time the walk visits $v$ immediately in between traversing those tubes. The threading $T$ must have a connected junction graph $J(v)$ for every vertex $v \in V$, and it must have any \defn{U-turns}: when exiting one tube, the walk next enters a different tube.
Define the \defn{length} $|T|$ of $T$ to be the total length of edges visited by~$T$.
For simplicity, we assume for much of our study that edges (tubes) have unit length --- in which case $|T|$ is the number of edge visits made by~$T$ --- and then generalize to the weighted case with arbitrary edge lengths.

\begin{figure*}[ht]
    \centering \includegraphics[width=\textwidth]{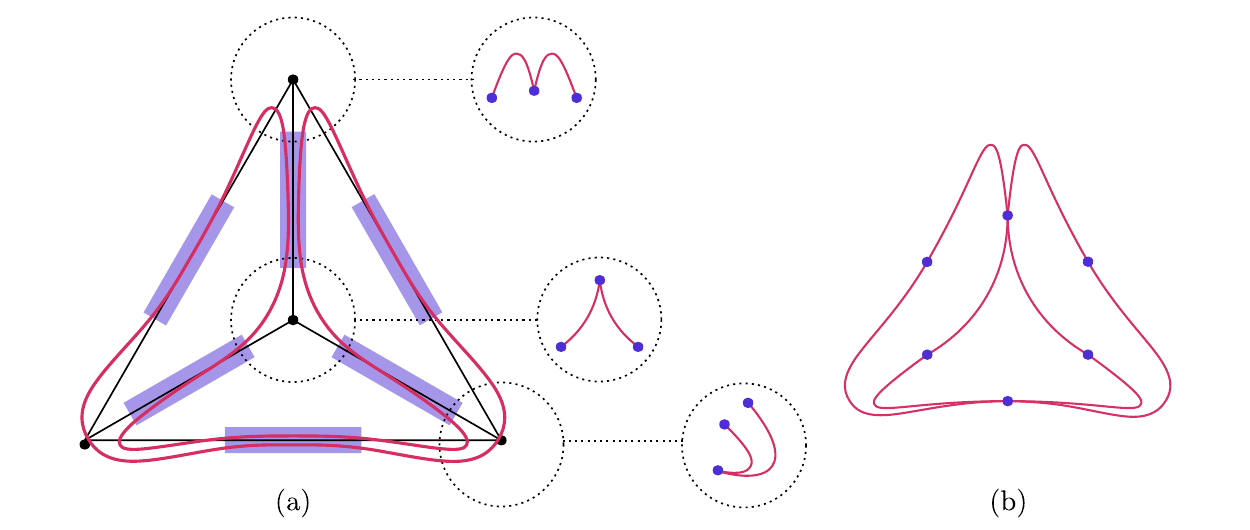}
      \caption{\label{fig:junction-graphs} (a) The closed walk (red) on the graph (black) of a tetrahedron induces junctions graphs (circled on the right) that are connected, and so it is a threading. (b) The union of junction graphs is called the ``threading graph'' (Section~\ref{sec:closed-walk}).}
\end{figure*}

\subparagraph*{Our Results.}

In this paper, we analyze and ultimately solve the \threading\ problem, where the goal is to find a minimum-length threading $T$ of a given graph $G$. Our results are as follows.
\begin{itemize}
    \item In \sect{sec:formulation}, we give a local characterization of threading in terms of local (per-vertex and per-edge) constraints that help us structure our later algorithms and analysis.
    \item In \sect{sec:bounds}, we prove tight worst-case bounds on two measures of an optimal threading~$T$. First, we analyze the minimum length $|T|$ in a graph with unit edge lengths, proving that $2 m - n \leq |T| < 2 m$ where $m$ and $n$ are the numbers of edges and vertices, respectively, and that both of these extremes can be realized asymptotically. Second, we prove that $T$ traverses any one edge at most $\Delta-1$ times, where $\Delta$ denotes the maximum vertex degree in~$G$, and that this upper bound can be realized. The second bound is crucial for developing subsequent algorithms.
    \item In \sect{sec:algorithm}, we develop a polynomial-time algorithm for \threading, even with arbitrary edge lengths, by a reduction to minimum-weight perfect matching.
    \item In \sect{sec:special}, we develop more efficient algorithms for two scenarios: \threading~on cubic graphs, and \double, a constrained version of \threading\ where the threading $T$ is allowed to visit each edge at most twice.
\end{itemize}

\section{Problem Formulation}
\label{sec:formulation}

Let $G = (V, E)$ be a graph with $n = |V|$ vertices and $m=|E|$ edges.
Assume until \sect{sec:weighted} that all edges have unit length.
Recall that a \defn{threading} of $G$ is a closed walk through $G$ that has no U-turns and induces a connected junction graph at each vertex.
As an alternative to this ``global'' definition (a closed walk), we introduce a more ``local'' notion of threading consisting of constraints at each edge and vertex of the graph and prove its equivalence to threading.

We give the intuition of ``local threading'' before we state the formal definition. A local threading assigns a nonnegative integer $x_{uv} \in \mathbb{N}$ for each edge $uv \in E$, which counts the number of times the threading visits or \defn{threads} edge $uv$; we refer to $x_{uv}$ as the \defn{count} of~$uv$.
These integers are subject to four constraints; we argue that these constraints are necessary conditions for a threading: 

\begin{enumerate}
    \item 
    Each edge must be threaded at least once, so $x_{uv}\geq 1$ for all $uv\in E$. 
    
    \item 
    A threading increments the count of \emph{two} edges at junction $v$ every time it traverses $v$, so the sum of counts for all edges incident to $v$ must be even. 
    
    \item 
    Forbidding U-turns implies that, if edge $uv$ is threaded $k$ times, then the sum of counts for the remaining edges incident to $v$ must be at least $k$ to supply these visits. 
    
    \item 
    Because the junction graph $J(v)$ of $v$ is connected, it has at least enough edges for a spanning tree, that is, $d(v)-1$ where $d(v)$ denotes the degree of $v$, so the sum of counts of edges incident to $v$ must be at least $2(d(v) - 1)$. 
\end{enumerate}

More formally:
\begin{definition}[Local Threading]
\label{def:local-threading}
Given a graph $G = (V,E)$, a \defn{local threading} of $G$ consists of integers $\set{x_{uv}}_{uv\in E}$ satisfying the following constraints:

\begin{enumerate}[label=\normalfont{\textbf{(C\arabic*)}},ref=C\arabic*, leftmargin=1cm]
        \item\label{c1}
             $x_{uv} \geq 1$ for all $uv \in E$;
        \item\label{c2}
             $\sum_{u \in N(v)} x_{uv} \equiv 0 \pmod 2$ for all $v \in V$;
        \item\label{c3}
             $\sum_{w\in N(v)\setminus \set{u}} x_{wv}\geq x_{uv}$ for all $uv \in E$; and
        \item\label{c4}
             $\sum_{u \in N(v)} x_{uv} \geq 2(d(v) - 1)$ for all $v \in V$.
    \end{enumerate}

\end{definition}

\ip\, defined as minimizing $\sum_{uv \in E} x_{uv}$, is, in fact, an integer linear program. However, this observation is not helpful algorithmically because integer programming is NP-complete.
Nonetheless, local threading will be a useful perspective for our later algorithms.

The observations above show that any threading $T$ induces a local threading by setting each count $x_{uv}$ to the number of times $T$ visits edge $uv$, with the same length: $|T| = \sum_{uv\in E} x_{uv}$. In the following theorem, we show the converse and, thus, the equivalence of threadings with local threadings:

\begin{theorem}
    We can construct a threading $T$ of $G$ from a local threading $\set{x_{uv}}$ of $G$ such that $T$ visits edge $uv$ exactly $x_{uv}$ times. Hence $|T| = \sum_{uv \in E} x_{uv}$.
\end{theorem}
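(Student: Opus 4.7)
The plan is to lift $\set{x_{uv}}$ to a walk via an Eulerian circuit in an auxiliary multigraph, then enforce the no-U-turn and junction-connectivity constraints via a careful \emph{transition system} at each vertex. Let $H$ be the multigraph on $V$ obtained by replacing each edge $uv\in E$ by $x_{uv}$ parallel copies. Then \cstr{c1} plus connectivity of $G$ makes $H$ connected, and \cstr{c2} makes every degree even, so $H$ is Eulerian. A transition system at $v$ is a perfect matching on the half-edges of $H$ incident to $v$; once one is chosen at every vertex, $H$ decomposes into a disjoint union of closed walks and each $v$ inherits a junction multigraph $J(v)$ by collapsing parallel half-edges into their edge class.

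For each $v$ independently, I would first construct a connected loopless multigraph $\widetilde J(v)$ on the vertex set $N(v)$ in which $u$ has degree $x_{uv}$. The conditions \cstr{c1}--\cstr{c3} (positivity, even sum, and max-degree bounded by the rest) are the classical necessary-and-sufficient conditions for \emph{some} loopless realization to exist, and the extra bound $\sum x_{uv}\ge 2(d(v)-1)$ from \cstr{c4} lets a standard edge-swap argument merge any components of such a realization without changing the degree sequence or creating loops. Arbitrarily assigning the $x_{uv}$ half-edges of class $uv$ at $v$ to the $x_{uv}$ endpoints at vertex $u$ of $\widetilde J(v)$ then defines a transition system on $H$; looplessness of $\widetilde J(v)$ makes every transition non-U-turn, and by construction $J(v)=\widetilde J(v)$ is connected at every~$v$.

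The main obstacle is that this transition system may still decompose $H$ into $k>1$ closed walks $W_1,\dots,W_k$, since the $\widetilde J(v)$'s were chosen independently. I will resolve this by a local modification that preserves each $J(v)$ as a multigraph yet strictly decreases $k$. Since $H$ is connected, some vertex $v$ is visited by two distinct walks $W_i,W_j$; labeling each edge of $J(v)$ by the walk using that transition, $J(v)$ carries at least two labels. Connectivity of $J(v)$ then forces two differently labeled edges to share an endpoint (otherwise every component would be monochromatic, contradicting connectivity with two labels present). Writing the two transitions as $\set{h_{a_1},h_{b_1}}$ and $\set{h_{a_2},h_{c_1}}$ with shared class $a$, I swap them to $\set{h_{a_1},h_{c_1}}$ and $\set{h_{a_2},h_{b_1}}$: the multiset of edges of $J(v)$ is unchanged (so still connected and loopless), but tracing the new transitions through $v$ splices $W_i$ and $W_j$ into a single closed walk while leaving the other walks untouched. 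Iterating drives $k$ to $1$, and the resulting closed walk visits each $uv$ exactly $x_{uv}$ times, has no U-turns, and has every $J(v)$ connected, so it is a threading of length $\sum_{uv\in E} x_{uv}$.
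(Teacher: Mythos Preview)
Your argument is correct and shares the paper's two-phase outline---build a connected loopless junction graph $J(v)$ at each vertex, then assemble these into a single closed walk---but executes both phases differently. For the first phase the paper gives an explicit algorithm (repeatedly strip an edge between the two largest-degree tubes until the residual degree sequence is that of a tree, then build the tree greedily), with a careful loop-invariant analysis; you instead invoke the loopless-multigraph realizability criterion together with a 2-switch argument for connectivity. Both establish the same existential fact, though only the paper's version comes with a running-time bound. The second phase is where the routes genuinely diverge: the paper forms a \emph{threading graph} on the tubes whose edge set is $\bigcup_v E(J(v))$ and extracts an Euler tour of it that never takes two consecutive edges from the same $J(v)$, relying on a cited result on forbidden-transition Euler tours to guarantee such a tour exists; you instead read each $J(v)$ as a transition system on the Eulerian multigraph $H$, accept that this may decompose $H$ into several closed walks, and splice them one pair at a time via a local swap at a shared vertex that preserves $J(v)$ as a multigraph. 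Your route is more self-contained (no external Euler-tour lemma needed), while the paper's route is more directly algorithmic and yields the stated time bound.
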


We shall prove this theorem in two parts. First, we show that it is always possible to form a junction graph at every vertex given a local threading (\sect{sec:junction-graph}). Then we show that a closed walk can be obtained from the resulting collection of junction graphs (\sect{sec:closed-walk}).

\subsection{Constructing a Connected Junction Graph}\label{sec:junction-graph}

Forming a junction graph $J(v)$ at vertex $v$ reduces to constructing a connected graph on vertices $t_1, \dots, t_{d(v)}$, where each vertex represents a tube incident with $v$, with degrees $x_1, \dots, x_{d(v)}$, respectively.
We shall construct $J(v)$ in two steps, first in the case where \cstr{c4} holds with equality (Lemma~\ref{lemma:spanning-tree}) and then in the general case (Lemma~\ref{lemma:junction-graph}).

\begin{lemma}\label{lemma:spanning-tree}
We can construct a tree $S$ consisting of $d$ vertices with respective degrees $x_1, \dots, x_{d} \geq 1$ satisfying $\sum_{i=1}^d x_i = 2(d-1)$ in $O(d)$ time.
\end{lemma}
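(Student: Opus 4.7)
The plan is to prove the lemma constructively by a greedy leaf-pairing procedure, together with an inductive correctness argument and a data-structure choice that keeps the total running time linear.

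First I would establish the key structural fact: under the hypothesis $\sum_{i=1}^d x_i = 2(d-1)$ with every $x_i \geq 1$, at least one $x_i$ must equal $1$. Indeed, if all $x_i \geq 2$ then $\sum x_i \geq 2d > 2(d-1)$, a contradiction. This gives a leaf to start from.

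Next I would describe the construction. Maintain two buckets during the algorithm: a bucket $L$ of indices $i$ with current degree demand $x_i=1$ (the \emph{leaves}), and a bucket $H$ of indices with $x_i \geq 2$. While $d \geq 3$, pick any $i \in L$ and any $j \in H$, add the edge $ij$ to $S$, decrement $x_j$ by $1$, and remove $i$ from further consideration; if the decrement makes $x_j=1$, move $j$ from $H$ to $L$. When only two vertices remain, connect them with an edge. Each step removes exactly one index from the active set and does $O(1)$ work (linked-list/stack-based buckets suffice), so the total running time is $O(d)$.

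For correctness I would argue by induction on $d$ that the algorithm produces a tree realizing the prescribed degrees. The invariant to maintain is that after each iteration, the remaining $d-1$ active indices still have positive integer demands summing to $2((d-1)-1)$: we removed a vertex of demand $1$ and decreased another by $1$, so the sum drops by exactly $2$. The demands remain $\geq 1$ because we only decrement an index in $H$ (which had demand $\geq 2$). Acyclicity follows because every added edge is incident to a freshly-eliminated leaf, so it cannot close a cycle with edges added later. Connectivity follows because each new leaf is attached to the currently growing forest's single remaining component (ultimately the final edge joins the last two vertices). The base case $d=2$ forces $x_1=x_2=1$, and the single edge between them is the desired tree.

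I do not expect any real obstacle here beyond being careful about the $O(d)$ bookkeeping — in particular, updating the bucket membership of $j$ in constant time when its demand drops to $1$, and handling the final two-vertex case as a separate termination condition rather than letting the loop attempt to pick from an empty $H$.
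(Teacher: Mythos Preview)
Your proposal is correct and follows essentially the same approach as the paper: both proofs inductively attach a leaf (some $x_j=1$) to a vertex with $x_i\ge 2$, reduce the degree sequence, and implement this with two buckets/stacks for $O(d)$ total time. The only cosmetic difference is that the paper argues via the average $2(d-1)/d\in(1,2)$ that both $L$ and $H$ are nonempty for $d\ge 3$, whereas you argue nonemptiness of $L$ directly and leave nonemptiness of $H$ implicit in the invariant---a trivial check.
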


\begin{proof}
We provide an inductive argument and a recursive algorithm.
In the base case, when $d=2$, $x_1 = x_2 = 1$, and so the solution is a one-edge path.
For $d > 2$, the average $x_i$ value is $\frac{2(d-1)}{d}$ which is strictly between $1$ and~$2$.
Hence there must be one vertex $i$ satisfying $x_i > 1$
and another vertex $j$ satisfying $x_j = 1$.
Now apply induction/recursion to $x'$
where $x'_k = x_k$ for all $k \notin \{i,j\}$,
$x'_i = x_i - 1$, and $x_j$ does not exist (so there are $n-1 < n$ values),
to obtain a tree $S'$. We can construct the desired tree $S$ from $S'$
by adding the vertex $j$ and edge $ij$.

The recursive algorithm can be implemented in $O(d)$ time as follows.
We maintain two stacks: the first for vertices of degree $> 1$ and the second for vertices of degree~$1$.
In each step, we pop vertex $i$ from the first stack, pop vertex $j$ from the second stack, and connect vertices $i$ and $j$. We then decrease $x_i$ by $1$ and push it back onto one of the stacks depending on its new value. This process continues until the stacks are empty. Each step requires constant time, and we perform at most $\sum_{i=1}^d x_i = O(d)$ steps, so the total running time is $O(d)$.
\end{proof}

\begin{lemma}
\label{lemma:junction-graph}
    Given a local threading $\set{x_{e}}$ and a vertex $v \in V$, we can construct a connected junction graph $J(v)$ with no self-loops in $O\big(\sum_{u \in N(v)}x_{uv}\big)$ time.
\end{lemma}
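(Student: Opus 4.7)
The plan is to decompose $J(v)$ into a spanning tree (contributing $d-1$ edges) plus $s' := \tfrac12 \sum_i x_i - (d-1) \ge 0$ extra edges, where $d := d(v)$, $s := \tfrac12 \sum_i x_i$, and, for brevity, $x_i := x_{u_i v}$ for $u_i \in N(v)$. By \cstr{c2}, $s$ is an integer, and by \cstr{c4}, $s' \ge 0$. I would write each $x_i = y_i + z_i$, where $y_i$ will be the vertex's degree in the tree and $z_i$ its degree in the extra edges. Three requirements must hold simultaneously: (a) $1 \le y_i \le x_i$ so a spanning tree with those degrees is possible; (b) $\sum_i y_i = 2(d-1)$ so Lemma~\ref{lemma:spanning-tree} applies to $(y_i)$; and (c) $z_i \le s'$ for all $i$ (equivalently $z_i \le \sum_{j\ne i} z_j$), so the excess half-edges can be paired without creating self-loops.

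The main obstacle --- and the crux of the proof --- is establishing that such a split exists. I would initialize $y_i := \max(1, x_i - s')$, which satisfies (a) (since $x_i \ge 1$ and $s' \ge 0$) and already makes $z_i = x_i - y_i = \min(x_i - 1, s') \le s'$, giving (c). Feasibility then reduces to the two-sided bound
\[
\sum_i \max(1,\, x_i - s') \;\le\; 2(d-1) \;\le\; \sum_i x_i .
\]
The right inequality is \cstr{c4}. For the left inequality, I would rewrite it as $\sum_i (x_i - s' - 1)^+ \le d - 2$ and case-split on $|B|$, where $B := \{i : x_i \ge s' + 2\}$. When $|B| \le 1$, the key input is \cstr{c3}, which yields $x_i \le s$ and hence $x_i - s' - 1 \le d - 2$. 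When $|B| \ge 2$, I would bound $\sum_{i \in B} x_i \le 2s - (d - |B|)$ using \cstr{c1} on the vertices outside $B$; routine algebra then reduces the desired inequality to the assumed $|B| \ge 2$. Finally, a linear scan raises $y_i$'s (those with $y_i < x_i$) one at a time until $\sum_i y_i = 2(d-1)$; enough room exists because $\sum_i x_i \ge 2(d-1)$ by \cstr{c4}, and raising $y_i$ only decreases $z_i$, preserving (c).

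With a valid $(y_i)$ in hand, I invoke Lemma~\ref{lemma:spanning-tree} to build a spanning tree $S$ of $J(v)$ in $O(d)$ time. For the excess $(z_i)$, I would use the classical ``split-in-half'' construction: list the $2s'$ excess half-edges in any order that groups each vertex's $z_i$ half-edges contiguously, then pair the $k$-th half-edge with the $(k + s')$-th, for $k = 1, \ldots, s'$. Because $z_i \le s'$, no vertex's contiguous block can include both position $k$ and position $k+s'$, so no self-loops arise. Parallel edges (including parallels to tree edges of $S$) are allowed, so $J(v)$ is simply $S$ with these $s'$ extra edges overlaid.

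For the running time, the degree split is $O(d)$, Lemma~\ref{lemma:spanning-tree} is $O(d)$, and the half-edge pairing is $O(d + s')$, so the total is $O\bigl(d + \sum_{u \in N(v)} x_{uv}\bigr)$, comfortably within the stated $O\bigl(d \log d + \sum_{u \in N(v)} x_{uv}\bigr)$ bound.
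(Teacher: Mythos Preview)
Your proof is correct and follows the same high-level decomposition as the paper --- peel off $s'$ ``extra'' edges and then invoke Lemma~\ref{lemma:spanning-tree} on the residual degrees --- but the mechanics differ. The paper builds the extra edges \emph{greedily}: it repeatedly selects the two currently-largest residual degrees $x'_\alpha, x'_\beta$, adds the edge $t_\alpha t_\beta$, and decrements both, maintaining the loop invariant $x'_i \le \sum_{j\ne i} x'_j$ (inherited from \cstr{c3}) to guarantee that no $x'_i$ ever drops below~$1$. This needs a sorted structure for the max queries, which is where the $d(v)\log d(v)$ term comes from. You instead compute the tree/extra split $y_i + z_i = x_i$ in one shot via $y_i = \max(1, x_i - s')$, establish feasibility by the clean case analysis on $|B|$, and then realize the extra edges with the shift-by-$s'$ pairing rather than a greedy loop. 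Your route avoids the data structure entirely and actually runs in $O\bigl(d(v) + \sum_{u\in N(v)} x_{uv}\bigr)$, beating the paper's stated bound; the trade-off is that the paper's greedy argument is somewhat more self-evident, while yours front-loads the work into the inequality $\sum_i (x_i - s' - 1)^+ \le d-2$.
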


\begin{algorithm}
\caption{Constructing a Connected Junction Graph $J(v)$}\label{alg:junction-graph}
\begin{enumerate}
    \item \label{step:1} $R \leftarrow \emptyset$ \Comment{set of ``redundant'' edges}
    \item \label{step:2} $x_{i}' \leftarrow x_{i}~\text{for all}~i \in \set{1, \dots, d(v)}$
    \item \label{step:3} Repeat until $\sum_{i=1}^{d(v)} x_i' = 2(d(v)-1)$:
    \begin{enumerate}
        \item \label{step:3a}
        $x_{\alpha}' \leftarrow x_{\alpha}' - 1$ where
        $x_{\alpha}' = \max_{i=1}^{d(v)} x_i'$, breaking ties arbitrarily
        \item \label{step:3b}
        $x_{\beta}' \leftarrow x_{\beta}' - 1$ where
        $x_{\beta}' = \max_{i \in \set{1, \dots, d(v)} \setminus \set{\alpha}} x_i'$, breaking ties arbitrarily
        \item $ R \leftarrow R \cup \set{t_\alpha t_\beta}$
    \end{enumerate}
    \item \label{step:4} Compute tree $S$ on vertices $t_1, \dots, t_{d(v)}$ with degrees $x'_1, \dots, x'_{d(v)}$ (Lemma~\ref{lemma:spanning-tree})
    \item Return $R \cup S$
\end{enumerate}
\end{algorithm}

\begin{proof}
We give the construction of a connected junction graph $J(v)$, adopting the notation introduced at the start of this section. See Algorithm~\ref{alg:junction-graph} for the corresponding pseudocode. 

Recall that a local threading $\set{x_e}$ is a set of integers satisfying the constraints specified in Definition~\ref{def:local-threading}.
Label the edges incident to vertex $v$ by the integers $1, \dots, \deg(v)$.
The goal is to form a connected graph $J(v)$ having a vertex $t_i$ for each $i \in \{1, \dots, \deg(v)\}$, where the degree of each $t_i$ is $x_i$.  We begin with an empty graph (Step~\ref{step:1}) and initialize $x_i' = x_i$ for each $i$ (Step~\ref{step:2}), where $x_i'$ represents the number of additional edges required for vertex $t_i$ to achieve its desired degree $x_i$. While $\sum_{i=1}^{d(v)} x_i' > 2(d(v)-1)$, we repeat the following steps: find the two largest values $x_{\alpha}'$ and $x_{\beta}'$ (resolving ties arbitrarily), add an edge between their corresponding vertices $t_\alpha$ and $t_{\beta}$, and decrement $x_{\alpha}'$ and $x_{\beta}'$ by $1$ (Step~\ref{step:3}). The resulting $x_i'$ values sum to $2(d(v)-1)$, and we prove below that $x_i' \geq 1$ for each $i$. Next, we construct a tree with vertex degrees $x_1',\dots,x_i'$ via the algorithm in Lemma~\ref{lemma:spanning-tree} (Step~\ref{step:4}). We return the graph that follows from these two procedures.

This graph contains no self-loops because we require $\alpha \neq \beta$ (Step~\ref{step:3b}). We further assert that the graph is connected. To prove this fact, we demonstrate the proper application of the inductive procedure outlined in the proof of Lemma~\ref{lemma:spanning-tree} in forming a tree (Step~\ref{step:4}). We only need to validate that $x_1', \dots, x_{d(v)}' \geq 1$, as $\sum_{i=1}^{d(v)}x_i' = 2(d(v)-1)$ is guaranteed upon the termination of the loop (Step~\ref{step:3}). Suppose for contradiction that $x_k' < 1$. It follows that $x_{k}' = 1$ at the start of some iteration and was subsequently decremented, either via Step~\ref{step:3a} or~\ref{step:3b}. We consider these two cases:
\begin{itemize}
    \item \textbf{Case 1} (Step~\ref{step:3a}, $k = \alpha$):
          $x_{k}' \geq x_{i}'$ for all $i \in \set{1, \dots, d(v)}$,
          so
          \[\sum_{i=1}^{d(v)} x_i' \leq d(v) \cdot x_{k}' = d(v) \leq 2(d(v)-1),\]
          a contradiction for any $d(v) > 1$, which is assumed.
    \item \textbf{Case 2} (Step~\ref{step:3b}, $k = \beta$):
          As $x_{k}' \geq x_{i}'$ for all $i \in \set{1, \dots, d(v)} \setminus \{\alpha\}$,
          so
          \[\sum_{i \in \set{1, \dots, d(v)} \setminus \set{\alpha}} x_i' \leq (d(v)-1) \cdot x_k' = d(v)-1.\]
	   Recall that
	   $\sum_{i=1}^{d(v)} x_i' = x_{\alpha}' + \sum_{i \in \set{1, \dots, d(v)} \setminus \set{\alpha}} x_i' \geq 2d(v)$
          is required to enter the loop. Hence, applying the above deduction, $x_{\alpha}' > \sum_{i \in \set{1, \dots, d(v)} \setminus \{\alpha\}} x_i'$, contradicting the below invariant (\eqn{eqn:loop-invariant}) of the loop in Step~\ref{step:3}.
\end{itemize}

\subparagraph*{Loop Invariant:}
The following invariant is maintained by the algorithm's loop (Step~\ref{step:3}), established on initialization via \cstr{c3}:
\begin{equation}
\label{eqn:loop-invariant}
x_i' \leq \sum_{j \in \set{1, \dots, d(v)} \setminus \{i\}} x_j'~\text{for all}~i \in \set{1, \dots, d(v)}
\end{equation}
We observe that $\sum_{i=1}^{d(v)}x_i$ decreases by $2$ with every iteration: either both sides of Equation~\ref{eqn:loop-invariant} are reduced by 1, thereby maintaining the inequality, or the left-hand side remains unchanged while the right-hand side is reduced by 2. In the latter scenario, counts $x_{\alpha}', x_{\beta}' \geq x_i'$ are updated in Steps~\ref{step:3a} and~\ref{step:3b}. Observe that $x_{\alpha}' \geq 2$ because $\sum_{i=1}^{d(v)} x_i' \geq 2n$ is a prerequisite for loop entry. Letting $x_i''$ denote the value of $x_i'$ at the beginning of the next iteration, we arrive at the desired conclusion:
\[x_i'' = x_i' \leq (x_{\alpha}' - 2) + x_{\beta}'
\leq \sum_{j \in \set{1, \dots, d(v)} \setminus \{i\}} x_j' - 2
= \sum_{j \in \set{1, \dots, d(v)} \setminus \{i\}} x_j''.\]

\subparagraph*{Running-Time Analysis:}
To perform Steps~\ref{step:3a} and~\ref{step:3b} efficiently, we maintain the $x'$ values in a monotone priority queue,
specifically, an array $A$ of lists $A[0], A[1], \dots, A\big[\max_{i=1}^{d(v)} x_i\big]$,
where each list $L_j$ maintains the indices $i$ for which $x'_i = j$.
We can initialize this data structure in $O\big(d(v) + \max_{i=1}^{d(v)} x_i\big)$ time,
which is $O\big(\sum_{i=1}^{d(v)} x_i\big)$ because each $x_i \geq 1$.
We also maintain the largest array index $j$ for which $A[j]$ is nonempty, and
the second-largest array index $k$ for which $A[k]$ is nonempty.
To find and decrement the maximum value in the priority queue (as in Step~\ref{step:3a}),
we extract an index $\alpha$ from list $A[j]$, decrement $x'_\alpha$,
and then append $\alpha$ to list $A[x'_\alpha] = A[j-1]$.
If $A[j]$ is now empty, we also decrement $j$; the new $A[j]$ is guaranteed to be nonempty.
To find and decrement the second-largest value in the priority queue (as in Step~\ref{step:3b}),
we extract $\beta$ from $A[j]$ if $A[j]$ has an index other than $\alpha$ (i.e., has length $>1$),
and otherwise extract from $A[k]$; then we decrement $x'_\beta$, move $\beta$ to the correct list,
and optionally decrement either $j$ or $k$ as before.
Each of these steps takes constant time, so the overall running time is
$O\big(\sum_{i=1}^{d(v)} x_i\big) = O\big(\sum_{u \in N(v)}x_{uv}\big)$.
\end{proof}

\subsection{Obtaining a Closed Walk}\label{sec:closed-walk}

Now suppose we have a junction graph $J(v)$ for every vertex~$v$,
obtained by repeatedly applying Lemma~\ref{lemma:junction-graph}
to a given local threading.
Our goal is to find a closed walk in $G$ that has no U-turns and corresponds to these junction graphs.

Define the \defn{threading graph} to be the graph whose vertices correspond to tubes and whose edges are given by the union of all junction graphs (joining at vertices corresponding to the same tube).
See Figures~\ref{fig:junction-graphs} and~\ref{fig:naive-solution} for examples.

In this threading graph, we find an \defn{Euler tour}: a closed walk that visits each edge of the graph exactly once. The presence of an Euler tour through a threading graph is guaranteed because each vertex has even degree~\cite{eulerian}, specifically twice the count $x_e$ for vertex $t_e$. The tour can be computed in time linear in the number of edges of the input graph~\cite{eulerian-alg}, which is $O(\sum_{i=1}^n x_i)$.

To ensure that U-turns are avoided in the threading, we enforce that the Euler tour does not consecutively traverse two edges of the same junction graph, which can be done in linear time by a reduction to forbidden-pattern Euler tours~\cite{Jeffrey}.

\begin{figure*}[ht]
    \centering
      \includegraphics[width=\textwidth]{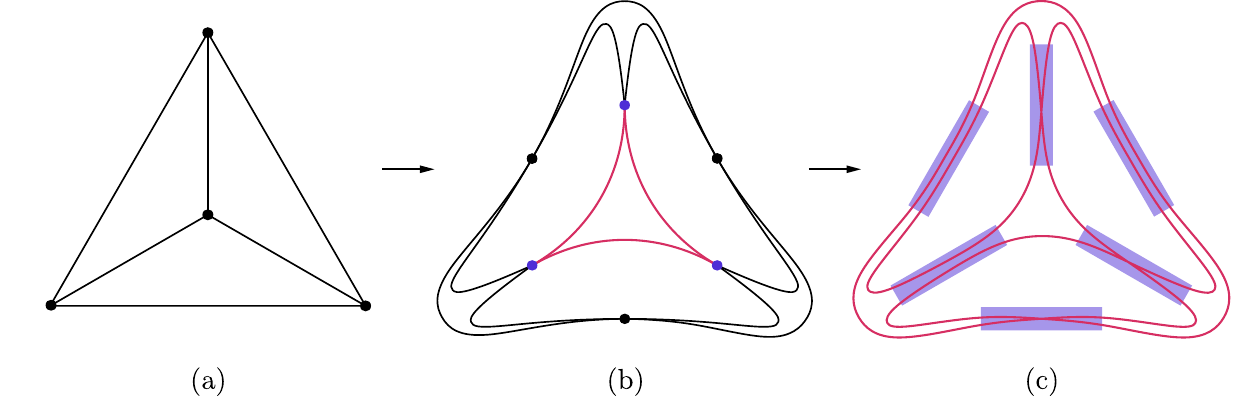}
      \caption{\label{fig:naive-solution} The target model, a threading graph featuring junction graphs that are cycles, and a threading of the input model following an Euler tour of the threading graph.}
\end{figure*}

Combining our results, we can convert a local threading $\set{x_e}$ of $G$ to a corresponding threading of $G$ in time $O\big(\sum_{v \in V}\sum_{u \in N(v)}x_{uv} + \sum_{i=1}^n x_i\big) = O\big(\sum_{i=1}^n x_i\big)$. Later in Section~\ref{sec:edge-visits}, we will show that the optimal threading satisfies $\sum_{i=1}^n x_i = O(m)$, in which case our running time simplifies to $O(m)$.

\begin{theorem}
We can convert a local threading solution of $G$ into a threading of $G$ in
$O(\sum_{i=1}^n x_i)$ time, which for an optimal threading is $O(m)$.
\end{theorem}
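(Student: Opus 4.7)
The plan is to assemble the two constructions developed in Sections~\ref{sec:junction-graph} and~\ref{sec:closed-walk} and carefully account for the running time. First, I would loop over every vertex $v\in V$ and invoke Algorithm~\ref{alg:junction-graph} (via Lemma~\ref{lemma:junction-graph}) to build the junction graph $J(v)$, at a cost of $O(d(v)\log d(v)+\sum_{u\in N(v)}x_{uv})$ per vertex. Summing over $v$ and using the handshake identities $\sum_v d(v)=2m$ and $\sum_v\sum_{u\in N(v)}x_{uv}=2\sum_{e\in E}x_e$ gives $O(n\log\Delta + \sum_e x_e)$ for this phase.

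Next, I would form the threading graph $H$ as the union of the $J(v)$'s, identifying the two copies of each tube-vertex $t_e$ coming from $J(u)$ and $J(v)$ when $e=uv$. The key structural observations are: (i) every $t_e\in H$ has even degree $2x_e$, since each endpoint of $e$ contributes $x_e$; (ii) $H$ is connected, since each $J(v)$ is connected by Lemma~\ref{lemma:junction-graph} and $G$ is connected with $x_e\ge 1$ for every $e$ by \cstr{c1}; and (iii) no $J(v)$ contains a self-loop. Hence $H$ admits an Euler cycle, and the forbidden-pattern Euler-tour algorithm cited in the excerpt produces, in $O(\sum_e x_e)$ time, an Euler cycle in which no two consecutive edges at a tube-vertex $t_e$ come from the same junction graph.

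Finally, I would translate the Euler cycle back into a closed walk $T$ of $G$: each visit to $t_e$ in the cycle uses one $H$-edge from $J(u)$ and one $H$-edge from $J(v)$, and I interpret this pair as one traversal of tube $e=uv$ (entering at one endpoint, exiting at the other). The resulting walk $T$ visits $e$ exactly $x_e$ times, induces exactly $J(v)$ at each $v$ (which is connected), and avoids U-turns by the forbidden-pattern guarantee; hence $T$ is a threading of $G$ with $|T|=\sum_e x_e$. Combining running times yields the claimed $O(n\log\Delta+\sum_e x_e)$, and invoking the forthcoming bound $\sum_e x_e=O(m)$ for optimal threadings (Section~\ref{sec:edge-visits}) gives $O(n\log\Delta+m)$.

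The main delicacy I expect is not any individual inequality but precisely verifying the correspondence between the forbidden-pattern constraint and the no-U-turn constraint: consecutive $H$-edges at $t_e$ drawn from \emph{different} junction graphs are what ``enter tube $e$ at one endpoint, exit at the other'' looks like in $H$, whereas consecutive edges from the \emph{same} $J(v)$ would correspond to immediately re-entering $e$ at vertex $v$, i.e., a U-turn. Making this translation rigorous, and checking that the forbidden-pattern Euler-tour algorithm of~\cite{Jeffrey} applies here with the right pattern specification, is the only step that requires care beyond bookkeeping.
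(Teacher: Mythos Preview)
Your proposal is correct and follows essentially the same approach as the paper: build all junction graphs via Lemma~\ref{lemma:junction-graph}, take their union as the threading graph, and apply the forbidden-pattern Euler-tour result to extract a U-turn-free closed walk, summing the per-vertex costs exactly as the paper does. If anything, you spell out a few points (connectedness of the threading graph, the translation of the Euler cycle back into a walk in $G$, and the precise correspondence between forbidden patterns and U-turns) that the paper leaves implicit.
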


\section{Worst-Case Bounds}\label{sec:bounds}

In this section, we prove tight worse-case upper and lower bounds on the total length of an optimal threading (\sect{sec:edge-visits}) and on the most times one edge may be visited by an optimal threading (\sect{sec:maximum-count}).

\subsection{Total Length}\label{sec:edge-visits}

Every graph $G$ with minimum degree $\geq 2$ has a \defn{double threading} defined by assigning each junction graph $J(v)$ to be a cycle of length~$d(v)$, as depicted in \fig{fig:naive-solution}. This threading results in each tube being traversed exactly twice, totaling a length of $2m$. Thus an optimal threading has length at most $2m$. We can approach this upper bound up to an additive constant by considering graphs with long sequences of bridges, such as the graph illustrated in Figure~\ref{fig:bound-examples}a. We shall later tighten this upper bound by considering graph properties (Lemma~\ref{lemma:cycle-bound}).

Now we establish a lower bound on the total length of any threading:

\begin{lemma}\label{lm:perfectlowerbound}
    Any threading must have length at least $2m-n$.
\end{lemma}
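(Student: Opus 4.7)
The plan is to prove the lower bound by summing the local constraint \cstr{c4} over all vertices and applying the handshake-style double counting for edge visits.

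First I would invoke the equivalence between threadings and local threadings (already established): any threading $T$ induces counts $\{x_{uv}\}_{uv \in E}$ satisfying \cstr{c1}--\cstr{c4} with $|T| = \sum_{uv \in E} x_{uv}$. This reduces the problem to showing $\sum_{uv \in E} x_{uv} \geq 2m - n$ for any such counts.

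Next, I would fix a vertex $v$ and use \cstr{c4}, which says $\sum_{u \in N(v)} x_{uv} \geq 2(d(v)-1)$. Summing this inequality over all $v \in V$ gives
\[
\sum_{v \in V} \sum_{u \in N(v)} x_{uv} \;\geq\; \sum_{v \in V} 2(d(v)-1) \;=\; 2\Big(\sum_{v \in V} d(v)\Big) - 2n \;=\; 4m - 2n,
\]
where the last equality uses the handshake lemma $\sum_{v} d(v) = 2m$. On the left-hand side, each edge $uv \in E$ contributes $x_{uv}$ exactly twice (once when $v$ is the outer summation variable, once when $u$ is), so the left-hand side equals $2 \sum_{uv \in E} x_{uv} = 2|T|$.

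Combining these gives $2|T| \geq 4m - 2n$, i.e., $|T| \geq 2m - n$, as claimed. I do not anticipate an obstacle here: the argument is purely a double-counting application of \cstr{c4}, and the other constraints \cstr{c1}--\cstr{c3} play no role in this lower bound. The only subtlety worth flagging is that this bound is tight only when \cstr{c4} holds with equality at every vertex simultaneously, which motivates the matching construction in the examples of \fig{fig:bound-examples} showing asymptotic tightness.
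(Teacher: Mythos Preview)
Your proof is correct and essentially identical to the paper's: both sum constraint \cstr{c4} over all vertices and use handshaking/double-counting to obtain $2|T|\geq 4m-2n$. The paper phrases the inequality via ``each junction graph is connected, so has at least $d(v)-1$ edges,'' but explicitly notes this is exactly \cstr{c4} in the ILP view.
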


\begin{proof}
    Each junction graph $J(v)$ is connected, so it contains at least $d(v)-1$ edges, and every edge $t_i t_j$ in $J(v)$ necessitates visits to two tubes, $t_i$ and $t_j$. By summing these visits across all junctions, we double-count visits to tubes.
    Thus, any threading $\set{x_{uv}}$ has length
    \[
    \sum_{uv\in E}x_{uv} = \frac{1}{2} \sum_{v\in V}\sum_{u\in N(v)}x_{uv}\geq \frac{1}{2}\sum_{v\in V}2(d(v) - 1) \stackbin[\textstyle {\uparrow \atop \text{\clap{(handshaking)}}}]{}{=} 2m - n.
    \]
    From the perspective of local threading, the inequality step follows from constraint \cstr{c4}.
\end{proof}

This lower bound is sometimes tight, such as in \fig{fig:junction-graphs}a, which we give a special name:
\begin{definition}\label{def:perfect-threading}
A \defn{perfect threading} is a graph threading of length $2m-n$.
\end{definition}
By the analysis in the proof of Lemma~\ref{lm:perfectlowerbound}, we obtain equivalent definitions:
\newpage
\begin{lemma} \label{lm:perfect}
    The following are equivalent for a graph threading $\set{x_{uv}}$:
    \begin{enumerate}
    \item $\set{x_{uv}}$ is a perfect threading.
    \item Every junction graph $J(v)$ is a tree, i.e., has exactly $d(v)-1$ edges.
    \item Inequality \normalfont{\cstr{c4}} holds with equality.
    \end{enumerate}
\end{lemma}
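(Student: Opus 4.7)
}
I would prove the three-way equivalence by showing (1)~$\Leftrightarrow$~(3) and (2)~$\Leftrightarrow$~(3), both of which essentially amount to tracking when the inequalities in the proof of Lemma~\ref{lm:perfectlowerbound} become equalities.

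For (1)~$\Leftrightarrow$~(3), the plan is to reuse the displayed computation in the proof of Lemma~\ref{lm:perfectlowerbound}:
\[
\sum_{uv\in E}x_{uv} \;=\; \tfrac{1}{2}\sum_{v\in V}\sum_{u\in N(v)}x_{uv} \;\geq\; \tfrac{1}{2}\sum_{v\in V}2(d(v)-1) \;=\; 2m-n.
\]
A threading is perfect exactly when the left-hand side equals $2m-n$, so exactly when the single inequality in this chain is tight. Since it is obtained by summing, over all $v\in V$, the per-vertex inequality \cstr{c4}, tightness of the sum is equivalent to tightness of every summand. Hence (1) holds iff \cstr{c4} holds with equality for every $v$.

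For (2)~$\Leftrightarrow$~(3), I would use the fact that each visit of the walk to a vertex $v$ contributes exactly one edge to the junction graph $J(v)$ and increments the counts of exactly two edges incident to $v$. Consequently $J(v)$ has $\tfrac{1}{2}\sum_{u\in N(v)}x_{uv}$ edges on $d(v)$ vertices. By the definition of threading, $J(v)$ is already connected, so it is a tree iff it has exactly $d(v)-1$ edges, which is precisely the statement that \cstr{c4} holds with equality at $v$. Doing this at every vertex gives the equivalence.

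I do not expect a real obstacle here: everything reduces to the equality case of the Lemma~\ref{lm:perfectlowerbound} argument. The only small point to be careful about is the edge-count identity $|E(J(v))| = \tfrac{1}{2}\sum_{u\in N(v)}x_{uv}$, which needs to be stated explicitly so that the equivalence between ``$J(v)$ is a tree'' and equality in \cstr{c4} is transparent.
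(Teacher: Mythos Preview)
Your proposal is correct and matches the paper's approach: the paper does not give a separate proof but simply says the lemma follows ``by the analysis in the proof of Lemma~\ref{lm:perfectlowerbound},'' which is exactly the equality-case tracking you describe. Your explicit identification of $|E(J(v))| = \tfrac{1}{2}\sum_{u\in N(v)} x_{uv}$ is the right detail to make the (2)~$\Leftrightarrow$~(3) step transparent.
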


Not every graph has a perfect threading (\fig{fig:bound-examples}b). A key observation is that bridges must be threaded at least twice. If we were to remove a bridge, the graph would have two connected components, and any closed walk on the entire graph would have to enter and exit each component at least once. Because the only way to pass between the two connected components is through the bridge, the walk would have to traverse the bridge at least twice.

Hence, vertices whose incident edges are all bridges must have junction graphs containing at least $d(v)$ edges. We call these vertices \defn{London} vertices. A tighter lower bound is $2m-n+|L|$ where $L$ is the set of London vertices in $G$. Note that this bound is determined by the number of London vertices rather than the number of bridges — a London vertex connected to multiple bridges only increases the bound by 1. 

\begin{figure*}[ht]
    \centering
    \includegraphics[width=\textwidth]{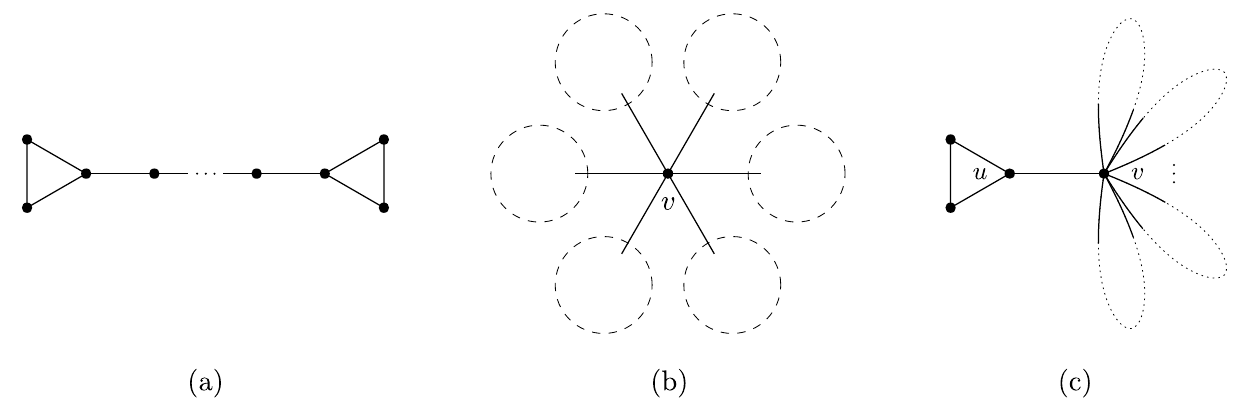}
    \caption{\label{fig:bound-examples}
    (a) A graph with a minimum threading length of $2m-6$. (b) A vertex connected to 6 disjoint parts of the graph (denoted as dashed circles). Each bridge incident to vertex $v$ is at least double-threaded, and hence $\cstr{c4}$ holds at $v$ as strict inequality, so the graph has no perfect threading. (c) The vertex $v$ has degree $\Delta$ and is connected to $\frac{\Delta - 1}{2}$ loops (dotted) of length > 5.  In an optimal threading, the edge $uv$ is threaded $\Delta-1$ times.}
\end{figure*}

Next, we consider an improved upper bound on the length of an optimal threading. While $2m$ edge visits always suffice to thread a graph, the following lemma demonstrates that this number is never necessary, as any graph without vertices of degree 1 contains a cycle.
\begin{lemma}
\label{lemma:cycle-bound}
Let $C$ be a set of vertex-disjoint simple cycles in $G$, and let $|C|$ denote the total number of edges in its cycles. In an optimal threading of $G$, at most $2m-|C|$ edge visits are needed.
\end{lemma}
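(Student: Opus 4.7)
The plan is to exhibit a local threading (in the sense of Section~\ref{sec:formulation}) of length exactly $2m-|C|$, satisfying constraints \cstr{c1}--\cstr{c4}, and then invoke the equivalence with global threadings (Theorem~2.2) to conclude. Explicitly, I would set $x_{uv} = 1$ for every edge $uv$ that lies on some cycle in $C$, and $x_{uv} = 2$ for every remaining edge. Since the cycles in $C$ are vertex-disjoint, they are in particular edge-disjoint, so the total length is $|C| \cdot 1 + (m - |C|) \cdot 2 = 2m - |C|$, which gives the claimed upper bound on the optimum.

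The bulk of the proof is verifying \cstr{c1}--\cstr{c4} for this assignment. Constraint \cstr{c1} is immediate. For \cstr{c2} and \cstr{c4}, split on whether a vertex $v$ lies on a cycle of $C$: if it does, then exactly two incident edges have count $1$ and the remaining $d(v)-2$ have count $2$, yielding $\sum_{u \in N(v)} x_{uv} = 2d(v)-2 = 2(d(v)-1)$, which is even and meets \cstr{c4} with equality; if it does not, then $\sum_{u \in N(v)} x_{uv} = 2d(v)$, which is even and exceeds $2(d(v)-1)$.

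Constraint \cstr{c3} is the main obstacle and needs a case analysis at each vertex $v$, again using vertex-disjointness to say $v$ belongs to at most one cycle of $C$. If $v$ is off every cycle, then for any $uv$ we have $\sum_{w \in N(v)\setminus\{u\}} x_{wv} = 2(d(v)-1) \geq 2 \geq x_{uv}$, using $d(v)\geq 2$. If $v$ is on a cycle, then for a cycle edge $uv$ (so $x_{uv}=1$) we get $\sum_{w\ne u} x_{wv} = 2d(v)-3 \geq 1$, and for a non-cycle edge $uv$ (so $x_{uv}=2$) we get $\sum_{w\ne u} x_{wv} = 2d(v)-4 \geq 2$, where the last inequality uses that $v$ already has two cycle edges, so the existence of an extra non-cycle edge forces $d(v)\geq 3$.

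With \cstr{c1}--\cstr{c4} verified, $\{x_{uv}\}$ is a valid local threading of length $2m-|C|$, and Theorem~2.2 converts it into a (global) threading of $G$ with the same total edge-visit count. Hence the optimum threading uses at most $2m-|C|$ edge visits, as claimed. The only subtlety worth flagging is the reliance on vertex-disjointness of the cycles in $C$: this is exactly what prevents an edge from being forced down to count $0$ or a vertex from needing its cycle-edge count reduced twice, and it is what makes the \cstr{c3} case analysis go through cleanly.
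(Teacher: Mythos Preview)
Your proof is correct and follows essentially the same approach as the paper: define $x_{uv}=1$ on cycle edges and $x_{uv}=2$ elsewhere, then verify \cstr{c1}--\cstr{c4} by splitting on whether a vertex lies on one of the cycles. If anything, your verification of \cstr{c3} is more explicit than the paper's, which simply asserts that ``all constraints are clearly met'' in both cases.
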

\begin{proof}
We use $e \in C$ to denote edge $e$ participating in some cycle in $C$. Define the set of integers $\set{x_{e}}$ where $x_{e} = 1$ if $e \in C$ and $x_{e} = 2$, otherwise. By design, $\sum_{e\in E}x_{e} = 2m - |C|$, and so it suffices to show that $\set{x_{e}}$ is a valid threading of $G$, i.e., $\set{x_{e}}$ satisfies constraints \cstr{c1}--\cstr{c4}. Observe that each vertex $v$ is either (1) covered once by a single cycle in $C$, meaning that two of its incident edges are single-threaded while the others are threaded twice, or (2) left uncovered, in which all of its incident edges are double-threaded. In both scenarios, all constraints are clearly met. Note that \cstr{c4} holds as an equality in a vertex covered once by a cycle in $C$.
\end{proof}

In Section~\ref{sec:double-threading}, we provide an efficient algorithm for computing a threading that achieves the above bound by reduction to finding the largest set of vertex-disjoint cycles.

\subsection{Maximum Visits to One Edge}\label{sec:maximum-count}

Each edge is threaded at least once in a graph threading, but what is the maximum number of times an \textit{optimal} solution can thread an edge? In this section, we establish that no optimal threading exceeds $\Delta-1$ visits to a single edge. This upper bound is tight, as demonstrated by edge $uv$ in Figure~\ref{fig:bound-examples}c: Constraint \cstr{c4} requires multiple visits to at least one edge connected to $v$, and revisiting $uv$ is the most economical when the loops incident to $v$ are long. It is worth noting that bounding the visits to an edge by the maximum degree of its endpoints may not suffice for an optimal solution, as in the case of the left-most edge in Figure~\ref{fig:bound-examples}c, which is traversed $\frac{\Delta-1}{2} > 2$ times despite both its endpoints having a degree of 2.

\begin{lemma}\label{lm:dmax}
    An optimal threading visits a single edge at most $\Delta - 1$ times.
\end{lemma}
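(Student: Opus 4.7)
The plan is to derive a contradiction with optimality. Assume an optimal local threading has $x_{uv} \geq \Delta$ for some edge $uv$; I will produce a strictly shorter local threading. The case $\Delta \leq 2$ is immediate (then $G$ is a cycle and each edge is threaded exactly once), so assume $\Delta \geq 3$.

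Pick any edge $e^{*}$ attaining $\max_{e} x_{e}$, so $x_{e^{*}} \geq x_{uv} \geq \Delta$. At any degree-$2$ vertex with neighbors $w, w'$, constraint \cstr{c3} forces $x_{wv} = x_{w'v}$; hence starting from $e^{*}$ and extending through degree-$2$ vertices on either side as far as possible yields a ``thread'' $T = (f_1 = e^{*}, f_2, \ldots, f_k)$ of edges all sharing the common value $x_{e^{*}}$, whose two ends rest at vertices $v_0, v_k$ of degree $\geq 3$ (possibly $v_0 = v_k$, when $T$ closes into a cycle through a single such vertex; the fully degree-$2$ case is ruled out because $G$ is connected and $\Delta \geq 3$).

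The proposed modification decrements $x_{f_i}$ by $2$ for every edge in $T$, leaving all other counts unchanged. Constraints \cstr{c1} (because $x_{e^{*}} - 2 \geq 1$), \cstr{c2} (even changes), \cstr{c3} at each internal degree-$2$ vertex of $T$ (both sides drop by $2$ in tandem), and \cstr{c4} there ($\sum_{w} x_{w v_i} = 2 x_{e^{*}} \geq 2\Delta \geq 6$, surviving the drop of $4$) are routine. The crux is \cstr{c3} and \cstr{c4} at the endpoint(s).

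For a non-thread edge $g = v_0 w$ at an endpoint $v_0$, if \cstr{c3} at $g$ were tight then $x_g = \tfrac12 S_{v_0}$ where $S_{v_0} = \sum_{w' \in N(v_0)} x_{w' v_0}$, which combined with \cstr{c3} at $f_1$ (giving $S_{v_0} \geq 2 x_{e^{*}}$) and maximality of $x_{e^{*}}$ forces $x_g = x_{e^{*}}$ and $S_{v_0} = 2 x_{e^{*}}$, leaving the remaining $d(v_0) - 2 \geq 1$ edges at $v_0$ with total count zero, violating \cstr{c1}. Hence \cstr{c3} at $g$ has slack $\geq 2$ (by the parity from \cstr{c2}), and when $v_0 = v_k$ an analogous parity-plus-maximality bookkeeping upgrades that slack to $\geq 4$, enough to absorb the double drop; \cstr{c4} at the endpoint is controlled the same way. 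This yields a strictly shorter local threading, contradicting optimality. The main subtlety I anticipate is nailing down the cycle-thread case and tracking parities cleanly; the essential role of choosing $e^{*}$ as a \emph{global} maximum is to rule out the cascading tight \cstr{c3} across adjacent vertices that would otherwise obstruct any local reduction.
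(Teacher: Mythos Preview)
Your proposal is correct and follows essentially the same approach as the paper: pick a maximum-count edge, extend through degree-$2$ vertices into a maximal path, subtract $2$ from every count along it, and verify \cstr{c1}--\cstr{c4} survive using maximality of $x_{e^*}$, \cstr{c1}, and the parity from \cstr{c2}. The only organizational differences are that the paper first treats the case where both endpoints of the max-count edge have degree $\geq 3$ before extending to the path case (which you unify from the start), and that your \cstr{c3} check proceeds via ``tight $\Rightarrow$ contradiction, then parity gives slack $\geq 2$'' rather than the paper's split into $d(v)\geq 4$ versus $d(v)=3$; your explicit flagging of the cycle case $v_0=v_k$ is a point the paper glosses over.
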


\begin{proof}
If $\Delta = 2$, then $G$ is a cycle, in which case the optimal threading traverses every edge once. Hence, for the remainder of this proof, we may assume $\Delta\geq 3$.

Suppose $\set{x_{e}}$ is an optimal threading of a graph $G$. Let $uv = \arg\max_{e\in E} x_{e}$ denote the edge with the highest count and assume for a contradiction that $x_{uv} \geq \Delta$. For simplicity, we first assume that $d(u),d(v) \geq 3$ and handle the case where $d(u)=2$ or $d(v)=2$ at the end. We shall show that we can remove two threads from $uv$ without violating the problem constraints. That is, the set $\set{\hat{x}_{e}}$ is a valid threading when defined as $\hat{x}_{e} = x_{uv} - 2$ if $e = {uv}$ and $\hat{x}_{e}= x_{e}$, otherwise. This conclusion contradicts our assumption that $\set{x_{e}}$ is optimal. The key to this  proof is the following:

\subparagraph*{\cstr{c4}:} Because $\set{x_e}$ satisfies \cstr{c3}, $\sum_{i=1}^{d(v)-1} x_{u_iv}\geq x_{uv} \geq \Delta$, and so
\[
\sum_{w\in N(v)}\hat{x}_{wv} = \hat{x}_{uv} + \sum_{i=1}^{d(v)-1} x_{u_iv}  \geq (\Delta-2) + \Delta \geq 2(d(v) -1).
\]
By symmetry, $u$ also satisfies \cstr{c4}, and therefore \cstr{c4} is met by all vertices of $G$. We are left to show that $\set{\hat{x}_{e}}$ satisfies \cstr{c1}--\cstr{c3}.

\subparagraph*{(C1):} $\hat{x}_{uv} > \Delta-2 \geq 1$. For any other edge $\hat{x}_e = x_e \geq 1$.

\subparagraph*{(C2):} Constraint \cstr{c2} is met as we do not modify the parity of any count.

\subparagraph*{(C3):} We now show \cstr{c3} is satisfied for $v$ and by symmetry, $u$, and therefore met by all vertices of $G$. Let us denote the neighbors of
$v$ by $u, u_1, \dots, u_{d(v)-1}$. We have
\[
\sum_{w\in N(v) \setminus \set{u}}\hat{x}_{wv}
= \sum_{w\in N(v) \setminus \set{u}}x_{wv}
\geq x_{uv}
> \hat{x}_{uv},
\]
so \cstr{c3} is satisfied for $uv$. We now demonstrate \cstr{c3} also holds for the remaining $u_iv$'s.
If $d(v) \geq 4$, because $x_{uv}\geq x_{u_iv}=\hat{x}_{u_iv}$ by our choice of $uv$, we have
\[
\sum_{w\in N(v)\setminus \set{u_i}}\hat{x}_{wv}\underset{\cstr{c1}}{\geq}
\hat{x}_{uv} + \underbrace{d(v) - 2}_{\geq 2}
\geq (x_{uv} - 2) + 2 = x_{uv} \geq  \hat{x}_{u_iv},
\]
as desired. Otherwise, $d(v) = 3$. Without loss of generality, we want to show that
\[
x_{u_1v} \leq
\hat{x}_{uv} + \hat{x}_{u_2v}
= x_{uv} + x_{u_2v} - 2.
\]
Because $x_{uv}\geq x_{u_1v}$ (by choice of $uv$) and $x_{u_2v}\geq 1$ (from \cstr{c1}), this inequality holds in all cases except when $x_{u_1v}=x_{uv}$ and $x_{u_2v}=1$. However, in this particular scenario, the sum of counts surrounding $v$ amounts to $2x_{uv}+1$, which contradicts \cstr{c2}.

If either endpoint of $uv$ has degree $2$, then we instead consider the maximal path $w_1, \ldots, w_\ell$ including $uv$ such that all intermediate vertices have degree~$2$: $d(w_2) = \ldots = d(w_{\ell - 1}) = 2$.
Thus, $d(w_1), d(w_{\ell}) \geq 3$ (as we are in the case $\Delta \geq 3$) and $u v = w_i w_{i+1}$ for some~$i$.
Because $\set{x_e}$ is a valid threading, we must have $x_{w_1w_2} = \cdots = x_{w_{\ell-1}w_\ell}=x_{uv}\geq \Delta$.
Now we modify the threading $\set{x_e}$ by removing two threads from each $x_{w_i w_{i+1}}$ to obtain $\set{\hat{x}_e}$. Constraints \cstr{c1}--\cstr{c4} remain satisfied at the degree-$2$ vertices $w_2, \ldots, w_{\ell-1}$. Finally, we can apply the proof above to show that the constraints remain satisfied at the end vertices $w_1$ and $w_\ell$ of degree at least~$3$.
\end{proof}

\section{Polynomial-Time Algorithm via Perfect Matching}\label{sec:algorithm}

In this section, we present our main result: a polynomial-time algorithm for computing an optimal threading of an input graph $G$. Our approach involves reducing \threading~to the problem of min-weight perfect matching, defined as follows.

A \defn{matching} in a graph is a set of edges without common vertices. A \defn{perfect matching} is a matching that covers all vertices of the graph, i.e., a matching of cardinality $\frac{n}{2}$. If the graph has edge weights, the \defn{weight} of a matching is the sum of the weights of its edges, and a \defn{min-weight perfect matching} is a perfect matching of minimum possible weight.

We begin by constructing a graph that possesses a perfect matching if and only if $G$ has a \emph{perfect} threading (Definition~\ref{def:perfect-threading}). This construction gives a reduction from determining the existence of a perfect threading to the perfect matching problem. Next, we extend this construction to ensure perfect matching always exists. In this extended construction, a perfect matching of weight $W$ corresponds to a threading of length $W + m$, giving a reduction from \threading\ to finding a min-weight perfect matching.

\subsection{Determining Existence of a Perfect Threading} \label{sct:perfectthread}
By Lemma~\ref{lm:perfect}, a threading $\set{x_{uv}}$ of a graph $G$ is a perfect threading if and only if it satisfies inequality \cstr{c4} with equality:

\begin{enumerate}[itemindent=.65cm,label=\textbf{(C$^*$\arabic*)},ref=C$^*$\arabic*]
    \setcounter{enumi}{3} 
    \item \label{c4s} $\sum_{u \in N(v)} x_{uv} = 2(d(v) - 1)$ for all $ v \in V$.
\end{enumerate}
In fact, most of the other constraints become redundant in this case:

\begin{lemma} \label{lm:perfect-c1-c4*}
    $\set{x_{uv}}$ is a perfect threading if and only if it satisfies \normalfont{\cstr{c1}} and \normalfont{\cfourstar}.
\end{lemma}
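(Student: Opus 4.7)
The forward direction is immediate: if $\{x_{uv}\}$ is a perfect threading, then by definition it is a threading, so it satisfies all of \cstr{c1}--\cstr{c4}, and by Lemma~\ref{lm:perfect} the inequality \cstr{c4} is tight, giving \cfourstar. So the substance is in the converse: assuming only \cstr{c1} and \cfourstar, I need to recover \cstr{c2} and \cstr{c3}, which together with \cstr{c1} and \cfourstar\ (which implies \cstr{c4}) give a threading, and then by Lemma~\ref{lm:perfect} it is perfect.

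\textbf{Verifying \cstr{c2}.} This is trivial: \cfourstar\ states $\sum_{u\in N(v)} x_{uv} = 2(d(v)-1)$, and the right-hand side is an even integer, so the sum is $\equiv 0 \pmod{2}$.

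\textbf{Verifying \cstr{c3}.} The key intermediate step is to show that \cstr{c1} together with \cfourstar\ forces $x_{uv} \leq d(v)-1$ for every edge $uv$. Indeed, fixing an edge $uv$ and isolating $x_{uv}$ in \cfourstar,
\[
x_{uv} = 2(d(v)-1) - \sum_{w\in N(v)\setminus\{u\}} x_{wv} \;\leq\; 2(d(v)-1) - (d(v)-1) \;=\; d(v)-1,
\]
where the inequality uses \cstr{c1} applied to each of the $d(v)-1$ remaining neighbors of~$v$. Once this upper bound is in hand, \cstr{c3} follows by the same identity:
\[
\sum_{w\in N(v)\setminus\{u\}} x_{wv} \;=\; 2(d(v)-1) - x_{uv} \;\geq\; 2(d(v)-1) - (d(v)-1) \;=\; d(v)-1 \;\geq\; x_{uv}.
\]

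\textbf{Wrap-up and anticipated obstacle.} The only nontrivial content is the derivation $x_{uv}\leq d(v)-1$ from the combination of \cstr{c1} and \cfourstar, and this is a short pigeonhole-style calculation. After verifying \cstr{c1}--\cstr{c4}, the set $\{x_{uv}\}$ qualifies as a local threading, and because \cstr{c4} holds with equality Lemma~\ref{lm:perfect} promotes it to a perfect threading, completing the equivalence. I do not expect a genuine obstacle here; the lemma is really a bookkeeping remark that under the tight constraint \cfourstar, the parity and U-turn conditions become automatic consequences of the lower bound \cstr{c1}.
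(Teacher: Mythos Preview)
Your proposal is correct and follows essentially the same argument as the paper: derive \cstr{c2} from the evenness of $2(d(v)-1)$, and derive \cstr{c3} by using \cstr{c1} on the $d(v)-1$ other neighbors to bound $\sum_{w\in N(v)\setminus\{u\}} x_{wv}\ge d(v)-1$ and hence $x_{uv}\le d(v)-1$. The paper's proof is more terse and omits the forward direction and the wrap-up via Lemma~\ref{lm:perfect}, but the content is identical.
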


\begin{proof}
    If $\set{x_{uv}}$ satisfies \cfourstar, then it satisfies constraint \cstr{c2}, because $2(d(v)-1)\equiv 0 \pmod{2}$.
    \cfourstar\ can be rewritten as
    $x_{uv} + \sum_{w \in N(v) \setminus \set{u}} x_{w v} = 2(d(v) - 1)$,
    and by \cstr{c1},
    $\sum_{w \in N(v) \setminus \set{u}}x_{wv} \geq d(v) - 1$,
    so \cstr{c3} also holds.
\end{proof}

Consider a vertex $v$ and its neighbors $u_1, \ldots, u_{d(v)}$. We can think of constraint \cfourstar~as allocating $2(d(v)-1)$ units among $x_{u_1v}, \ldots, x_{u_{d(v)}v}$. First, we must allocate one unit to each $x_{u_iv}$ in order to satisfy \cstr{c1}. This leaves $d(v) - 2$ units to distribute among the edges.

We show how to simulate this distribution problem by constructing a graph $H$ that has a perfect matching if and only if, for every vertex $v$, we can distribute $d(v)-2$ units among its neighboring $x_{u_i v}$. Thus $H$ has a perfect matching if and only if $G$ has a perfect threading.

Given a graph $G$, define the graph $H$ as follows; refer to \fig{fig:matching-reduction}. For each edge $uv \in E(G)$, create a perfect matching of $d_{uv} := \min\{d(u), d(v)\} - 2$ disjoint edges $(\overline{u}v_i,u\overline{v}_i)$, among $2 \, d_{uv}$ created vertices $\overline{u}v_1, u\overline{v}_1, \ldots, \overline{u}v_{d_{uv}}, u\overline{v}_{d_{uv}}$.%
\footnote{In the same way that $uv$ and $vu$ denote the same edge, we treat labels $u \overline{v}$ and $\overline{v} u$ to be equivalent. Thus, the notation $u\bar{v}_i$ and $\bar{v}u_i$ refers to the same vertex.}
For each vertex $v$, create $d(v) - 2$ vertices labeled $v_1, \ldots, v_{d(v)-2}$. For every edge $uv$ incident to $v$, add an edge between vertices $v_i$ and $u\overline{v}_j$ for all $1 \leq i \leq d(v) - 2$ and $1 \leq j \leq d_{uv}$ (forming a biclique). Note that any vertex of degree 2 disappears in this construction because of the $-2$ in each creation count.

\begin{figure*}[ht]
\centering
  \includegraphics[width=\linewidth]{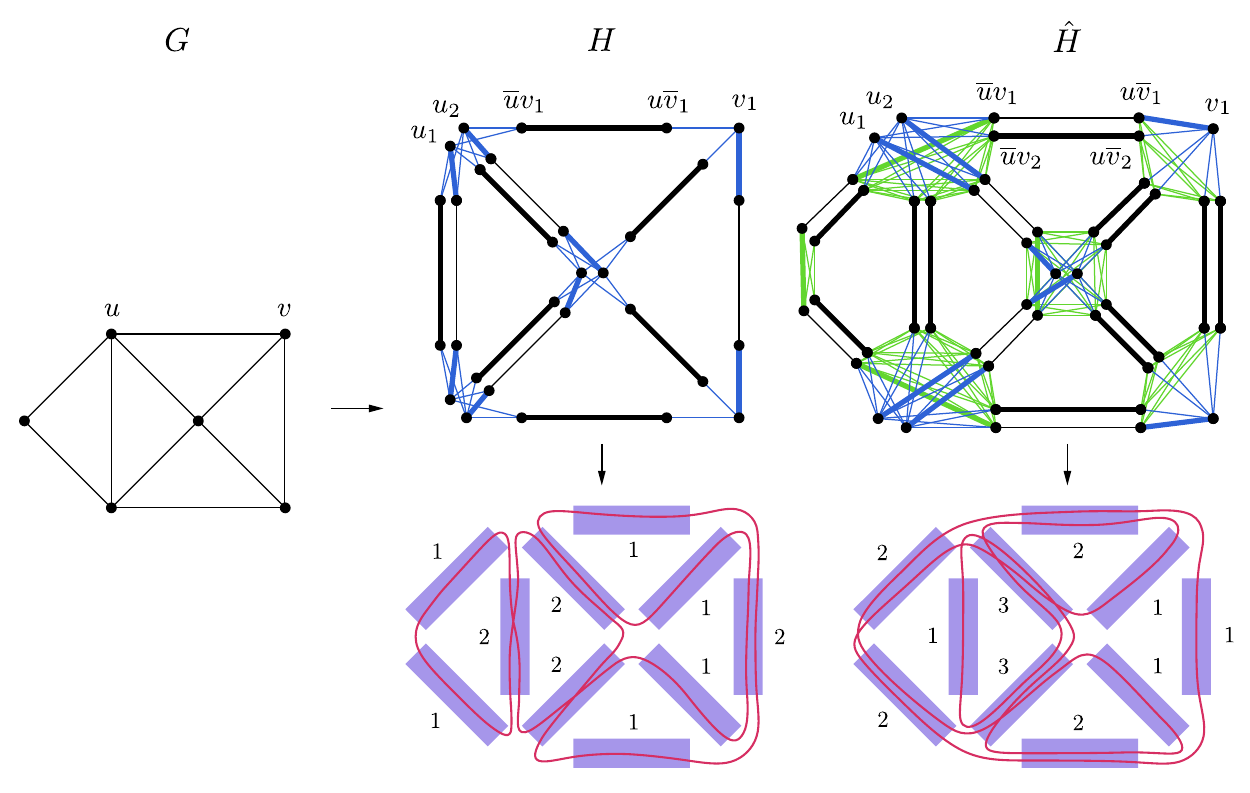}
  \caption{\label{fig:matching-reduction} Construction of $H$ and $\hat{H}$ from $G$, each with some matching in bold and a corresponding threading to the matching labeled with counts.}
\end{figure*}

\begin{theorem}\label{thm:perfectgtoh}
    $G$ has a perfect threading if and only if $H$ has a perfect matching.
\end{theorem}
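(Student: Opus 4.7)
The plan is to go through an intermediate reformulation before touching $H$. By Lemma~\ref{lm:perfect-c1-c4*}, a perfect threading is equivalent to integers $\{x_{uv}\}$ satisfying just \cstr{c1} and \cfourstar. Setting $y_{uv} = x_{uv} - 1$, these become $y_{uv} \geq 0$ and $\sum_{u \in N(v)} y_{uv} = d(v) - 2$ for every $v \in V$. Moreover, \cstr{c1} together with \cfourstar\ forces $x_{uv} \leq d(v) - 1$ (since the other $d(v) - 1$ edges at $v$ each contribute at least $1$), and symmetrically $x_{uv} \leq d(u) - 1$, so $y_{uv} \leq d_{uv} = \min\{d(u),d(v)\}-2$. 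Thus a perfect threading is equivalent to the existence of a nonnegative integer solution $y_{uv} \in \{0, \ldots, d_{uv}\}$ with prescribed column sums $d(v) - 2$, and it remains to observe that this is precisely what a perfect matching of $H$ encodes.

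For the forward direction $(\Rightarrow)$, given a perfect threading, I would build $M$ as follows. Within the $uv$-gadget, include exactly $d_{uv} - y_{uv}$ of the internal matching edges $(\overline{u}v_i, u\overline{v}_i)$ in $M$, leaving $y_{uv}$ pairs exposed on both sides. Each exposed vertex $u\overline{v}_i$ is joined by the biclique to every slot $v_1, \ldots, v_{d(v)-2}$ at $v$; since $\sum_{u \in N(v)} y_{uv} = d(v) - 2$, the total number of exposed $u\overline{v}_i$ over all $u \in N(v)$ equals the number of slots at $v$, so any bijection between these two sets extends $M$ at $v$. Doing this symmetrically at each $u$ yields a perfect matching of $H$.

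For the reverse direction $(\Leftarrow)$, given a perfect matching $M$ of $H$, define $y_{uv}$ to be the number of internal matching edges in the $uv$-gadget that are \emph{absent} from $M$. This number is well-defined from either endpoint: any such omitted internal edge forces both $\overline{u}v_i$ and $u\overline{v}_i$ (whose only other neighbors are the slots in the bicliques at $u$ and $v$, respectively) to be matched outward to slots, so the counts of exposed vertices on the two sides agree. Setting $x_{uv} = y_{uv} + 1$ immediately gives \cstr{c1}. Counting matched slots at $v$ yields $\sum_{u \in N(v)} y_{uv} = d(v) - 2$ and therefore $\sum_{u \in N(v)} x_{uv} = 2(d(v)-1)$, which is \cfourstar; Lemma~\ref{lm:perfect-c1-c4*} then produces a perfect threading.

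The main subtlety I anticipate is bookkeeping with the doubly-indexed barred labels in verifying that the $y_{uv}$ obtained from $M$ is consistent from both endpoints, and the degenerate case $d_{uv} = 0$ (when $d(u)$ or $d(v)$ is $2$), where the $uv$-gadget is empty and the construction silently forces $x_{uv} = 1$. This matches the threading side, since a degree-$2$ vertex has \cfourstar\ reading $x_{uv}+x_{wv}=2$, forcing both incident edges to be single-threaded.
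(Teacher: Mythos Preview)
Your proposal is correct and follows essentially the same approach as the paper: the paper defines $\varphi(M)$ by $x_{uv} = 1 + |\{(\overline{u}v_i, u\overline{v}_i)\} \setminus M|$, which is exactly your $x_{uv} = y_{uv}+1$, and proves the two directions as Claims~\ref{clm:perfectthread} and~\ref{clm:perfectthreadreverse} with the same counting arguments you give. Your explicit substitution $y_{uv}=x_{uv}-1$ and the upfront derivation of the bound $y_{uv}\le d_{uv}$ make the bookkeeping slightly cleaner than in the paper, but the content is the same.
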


To prove Theorem~\ref{thm:perfectgtoh}, we will show how to translate between a perfect threading of $G$ and a perfect matching of $H$.
Given a matching $M \subset E(H)$ of $H$, define a possible threading solution $\varphi(M) = \set{x_{uv}}$ by taking $x_{uv}$ to be $1$ plus the number of edges $(\overline{u}v_i,u\overline{v}_i)$ that are \textit{not} included in $M$:
$x_{uv} := 1 + \big|{\set{(\overline{u}v_i, u\overline{v}_i) : 1 \leq i \leq d_{uv}} \setminus M}\big|$.

\begin{claim}\label{clm:perfectthread}
    If $M$ is a perfect matching in $H$, then $\varphi(M)$ is a perfect threading of $G$.
\end{claim}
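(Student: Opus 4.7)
The plan is to invoke Lemma~\ref{lm:perfect-c1-c4*}, which tells us it suffices to verify only \cstr{c1} and \cfourstar\ for $\varphi(M) = \set{x_{uv}}$. Constraint \cstr{c1} is immediate from the definition: $x_{uv} = 1 + |\set{(\overline{u}v_i, u\overline{v}_i)}\setminus M| \geq 1$. The content of the claim therefore lives in verifying \cfourstar\ at each vertex $v \in V(G)$.

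Fix $v$, and examine the vertices of $H$ ``associated'' with $v$: the $d(v)-2$ ``demand'' vertices $v_1,\dots,v_{d(v)-2}$, and, for each neighbor $u\in N(v)$, the $d_{uv}$ ``supply'' vertices $u\overline{v}_1,\dots,u\overline{v}_{d_{uv}}$ sitting on $v$'s side of the edge gadget for~$uv$. By construction, the demand vertices $v_i$ have neighbors only among $v$'s supply vertices, while each supply vertex $u\overline{v}_j$ has exactly two kinds of neighbors: its internal partner $\overline{u}v_j$, and every $v_i$. Let $a_{uv}$ be the number of internal edges $(\overline{u}v_i,u\overline{v}_i)$ selected by $M$, so that by definition $x_{uv}=1+(d_{uv}-a_{uv})$. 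Any supply vertex on $v$'s side that is not matched internally must therefore be matched to some $v_i$.

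Because $M$ is perfect, every $v_i$ is matched (necessarily to some supply vertex on $v$'s side), so double-counting the matches between $\set{v_1,\dots,v_{d(v)-2}}$ and these supply vertices gives
\[
d(v)-2 \;=\; \sum_{u \in N(v)} (d_{uv}-a_{uv}).
\]
Summing $x_{uv}=1+(d_{uv}-a_{uv})$ over $u \in N(v)$ then yields
\[
\sum_{u \in N(v)} x_{uv} \;=\; d(v) + (d(v)-2) \;=\; 2(d(v)-1),
\]
which is exactly \cfourstar. The degenerate case $d(v)=2$ requires a brief side-check: there are no $v_i$ and $d_{uv}=0$ for both incident edges, so both $x$-values collapse to $1$ and the sum is $2 = 2(d(v)-1)$. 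Applying Lemma~\ref{lm:perfect-c1-c4*} completes the proof.

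The main obstacle is keeping the bookkeeping straight: the two ``sides'' of every edge gadget, the fact that the biclique restricts each $v_i$ to be matched only with supply vertices adjacent to $v$, and the boundary case of degree-2 vertices where the gadget is empty. Once the bipartite accounting between $\set{v_i}$ and $\set{u\overline{v}_j : u \in N(v)}$ is set up cleanly, the identity $d(v)-2 = \sum_{u\in N(v)}(d_{uv}-a_{uv})$ drops out by the perfectness of~$M$, and \cfourstar\ follows by a one-line arithmetic step.
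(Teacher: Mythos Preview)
Your proof is correct and follows essentially the same approach as the paper: reduce to \cstr{c1} and \cfourstar\ via Lemma~\ref{lm:perfect-c1-c4*}, then for each $v$ observe that the $d(v)-2$ vertices $v_i$ are matched bijectively with exactly those $u\overline{v}_j$ whose internal edge is absent from~$M$, yielding the count $d(v)-2$ and hence \cfourstar. Your explicit introduction of $a_{uv}$ and the separate handling of the degree-$2$ case are minor expository additions, not substantive differences.
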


\begin{proof}
    By Lemma~\ref{lm:perfect-c1-c4*}, it suffices to prove that $\varphi(M)$ satisfies \cstr{c1} and \cfourstar.
    The $1+$ in the definition of $\varphi(M)$ satisfies \cstr{c1}.
    For every vertex $v \in V$, the vertices $v_1, \ldots, v_{d(v) - 2}$ are all matched to vertices of the form $u\overline{v}_i$; for each such matching pair, the edge $(u\overline{v}_i, \overline{u}v_i) \notin M$. Conversely, for any vertex $u\overline{v}_i$ that is not matched to any $v_j$, the edge $(u\overline{v}_i, \overline{u}v_i)$ must be part of the matching. Hence, for each vertex $v$, the number of edges of the form $(u\overline{v}_i, \overline{u}v_i)$ that are not included in $M$ is exactly $d(v) - 2$. The sum $\sum_{u \in N(v)} x_{uv}$ includes this count and $d(v)$ additional $1$s, so equals $(d(v) - 2) + d(v) = 2(d(v)-1)$, satisfying \cfourstar.
\end{proof}

\begin{claim}\label{clm:perfectthreadreverse}
    For any perfect threading $\set{x_{uv}}$ of $G$, there exists a perfect matching $M$ of $H$ such that $\varphi(M) = \set{x_{uv}}$.
\end{claim}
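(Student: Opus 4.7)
The plan is to explicitly construct a perfect matching $M$ of $H$ that realizes the given perfect threading $\set{x_{uv}}$. This is essentially inverting the map $\varphi$ defined before Claim~\ref{clm:perfectthread}.

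First, I would establish the crucial feasibility bound: for every edge $uv \in E(G)$, we have $x_{uv} - 1 \leq d_{uv}$. This follows because a perfect threading satisfies \cstr{c3} (by Lemma~\ref{lm:perfect-c1-c4*}), so $x_{uv} \leq \sum_{w \in N(v)\setminus\set{u}} x_{wv} = 2(d(v)-1) - x_{uv}$, giving $x_{uv} \leq d(v) - 1$. Applying the same argument at $u$ yields $x_{uv} \leq \min\set{d(u), d(v)} - 1 = d_{uv} + 1$, as required. Thus for each edge $uv$ we can choose some index set $J_{uv} \subset \set{1, \ldots, d_{uv}}$ with $|J_{uv}| = x_{uv} - 1$, to record which of the $d_{uv}$ ``direct pair'' edges $(\overline{u}v_j, u\overline{v}_j)$ we will \emph{leave out} of $M$.

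Next I would construct $M$ in two stages. Stage (a): for every edge $uv \in E(G)$ and every $j \in \set{1, \ldots, d_{uv}} \setminus J_{uv}$, add the direct pair $(\overline{u}v_j, u\overline{v}_j)$ to $M$. Stage (b): at each vertex $v$, I need to match the companion vertices $v_1, \ldots, v_{d(v)-2}$ to some of the ``opened'' vertices $\set{u\overline{v}_j : u \in N(v),\, j \in J_{uv}}$ using the biclique edges. The count of opened vertices available at $v$ is
\[
\sum_{u \in N(v)} |J_{uv}| = \sum_{u \in N(v)} (x_{uv} - 1) = 2(d(v)-1) - d(v) = d(v) - 2,
\]
using \cfourstar, which exactly equals the number of companions. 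Hence an arbitrary bijection suffices. By symmetry, the same assignment at the other endpoint $u$ handles the vertices $\overline{u}v_j$ for $j \in J_{uv}$.

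I would then verify that $M$ is a perfect matching: each closed direct pair covers its two vertices; each opened direct pair has its two endpoints matched via biclique edges at the two respective endpoint vertices of $uv$; and every companion vertex $v_i$ is matched in Stage (b). Finally, I would check that $\varphi(M) = \set{x_{uv}}$: by construction, the number of direct pairs $(\overline{u}v_j, u\overline{v}_j)$ \emph{not} in $M$ equals $|J_{uv}| = x_{uv} - 1$, so $\varphi(M)_{uv} = 1 + (x_{uv} - 1) = x_{uv}$. There is no real obstacle here; the only subtlety is the feasibility bound $x_{uv} \leq d_{uv} + 1$, which I would highlight because it is the one place the proof implicitly relies on \cstr{c3} rather than just \cstr{c1} and \cfourstar.
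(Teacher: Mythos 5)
Your construction matches the paper's proof essentially step for step: leave $x_{uv}-1$ of the direct pair edges out of $M$, include the rest, then use \cfourstar\ to count exactly $d(v)-2$ opened vertices at each $v$ and match them bijectively with the companions $v_1,\ldots,v_{d(v)-2}$. The only cosmetic difference is in the feasibility bound $x_{uv}\leq d_{uv}+1$: you derive it via \cstr{c3}, while the paper gets $x_{uv}\leq d(v)-1$ directly from \cstr{c1} and \cfourstar\ (each $x_{wv}\geq 1$ forces $\sum_{w\in N(v)\setminus\set{u}}x_{wv}\geq d(v)-1$), so contrary to your closing remark the argument needs only \cstr{c1} and \cfourstar\ --- consistent with Lemma~\ref{lm:perfect-c1-c4*}, which shows \cstr{c3} is implied by those two.
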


\begin{proof}
    Given a perfect threading $\set{x_{uv}}$ of $G$, we construct a perfect matching of $H$ as follows.
    First, for every $uv\in E(G)$, we match the edges $(\overline{u}v_1, u\overline{v}_1), \ldots, (\overline{u}v_{d_{uv}-x_{uv}+1}, u\overline{v}_{d_{uv}-x_{uv}+1})$.
    We show that index $d_{uv}-x_{uv}+1$ is always non-negative; when it is zero, we match no such edges.
    By constraint \cfourstar, $x_{uv} = 2(d(v) - 1) - \sum_{w \in N(v) \setminus \set{u}} x_{wv}$.
    By constraint \cstr{c1}, each term in the sum is at least $1$, so $x_{uv}\leq d(v) - 1$. Thus $x_{uv}\leq d_{uv} + 1$, i.e., $d_{uv}-x_{uv}+1 \geq 0$.

    With our matching so far, the number of unmatched vertices of the form $u\overline{v}_i$ at each vertex $v$ is
    $\sum_{u \in N(v)} (x_{uv}-1)$.
    By \cfourstar, this count is exactly $2 (d(v) - 1) - d(v) = d(v) - 2$.
    Thus we can match each of these unmatched vertices to a unique vertex $v_j$ to complete our perfect matching.
\end{proof}

Claims~\ref{clm:perfectthread} and~\ref{clm:perfectthreadreverse} complete the proof of Theorem~\ref{thm:perfectgtoh}.

\subsubsection{Running-Time Analysis}

First, let us calculate the sizes of $V(H)$ and $E(H)$. Recall that $H$ has $d(v) - 2$ vertices corresponding to every vertex $v\in V(G)$, and up to $2(\min\{d(u), d(v)\} - 2)\leq 2\Delta$ vertices corresponding to every edge $uv\in E(G)$. Therefore, the maximum number of vertices in $H$ is
\[
\sum_{v\in V}(d(v) - 2) + 2 \sum_{uv\in E}\Delta \leq  2m - 2n + 2m\Delta = O( m \Delta).
\]
Now recall that $H$ has $\min\{d(u), d(v)\} - 2\leq \Delta$ edges for every $uv$ and at most $\Delta^3$ edges for every $v$. Thus, the total number of edges in $H$ is upper-bounded by 
\[
2\sum_{uv\in E}\Delta + \sum_{v\in V}\Delta^3\leq 2m\cdot \Delta + n \Delta^3 = O(n\Delta^3).
\]
We conclude that $H$ can be constructed in $O(n \Delta^3 + m \Delta)$ time. 

Micali and Vazirani \cite{perfectmatching} gave an algorithm that computes the maximum matching of a general graph in $O(\sqrt{n}m)$ time, thereby enabling us to verify the existence of a perfect matching. It follows that we can determine a perfect matching of $H$ in time
\[
O(\sqrt{|V(H)|}\cdot |E(H)|) = O(\sqrt{m\Delta}\cdot n\Delta^3) = O(n\sqrt{m}\cdot \Delta^{3.5}).
\]
This running time exceeds the construction time of $H$, and so it is the final running time of our algorithm. 

Note that we can improve the bound on the size of $H$ by considering the  \textit{arboricity} of $G$. The arboricity of a graph $\alpha(G)$ is defined as the minimum number of edge-disjoint spanning forests into which $G$ can be decomposed \cite{chibanishizeki}. This parameter is closely related to the degeneracy of the graph and is often smaller than $\Delta$. Chiba and Nishizeki~\cite{chibanishizeki} show that $\sum_{uv\in E}\min\{d(u), d(v)\} \leq 2m\alpha(G)$, which would give us a tighter bound on the size of $V(H)$.

In summary, we can find a perfect threading of $G$, if one exists, by determining a perfect matching in $H$ in $O(n\sqrt{m}\cdot \Delta^{3.5})$ time.

\subsection{Finding an Optimal Threading}\label{sct:optthreading}

Now we examine the general scenario where a perfect threading may not exist, i.e., \cstr{c4} may hold with a strict inequality for some vertex. The graph $H$ constructed in \sect{sct:perfectthread} permits exactly $2(d(v) -1)$ visits to vertex $v$. We aim to allow more visits to $v$ while satisfying constraints \cstr{c2} and \cstr{c3}.

In a general threading, $x_{uv}\leq \min\{d(u), d(v)\} - 1$ (as argued in Claim~\ref{clm:perfectthreadreverse}) is not necessarily true. However, Lemma~\ref{lm:dmax} gives us a weaker upper bound, $x_{uv}\leq \Delta - 1$, for any optimal threading. We therefore modify the construction from \sect{sct:perfectthread} in two ways. First, we generate $\Delta-2$ copies of every edge, regardless of the degree of its endpoints. Second, for every pair of edges $uv$ and $wv$ meeting at vertex $v$, we introduce an edge between $u\overline{v}_i$ and $w\overline{v}_j$ for all $1 \leq i,j\leq \Delta-2$. Intuitively, these edges represent threads passing through $v$, going from  $uv$ to $wv$, after having met the lower bound of $2(d(v)-1)$ visits.

More formally, we define a weighted graph $\hat{H}$ from $G$ as follows; refer to \fig{fig:matching-reduction}.
For each edge $uv \in E(G)$, create a weight-$0$ perfect matching of $\Delta - 2$ disjoint weight-$0$ edges $(\overline{u}v_i,u\overline{v}_i)$, among $2 (\Delta-2)$ created vertices $\overline{u}v_1, u\overline{v}_1, \ldots, \overline{u}v_{\Delta-2}, u\overline{v}_{\Delta-2}$;
these edges are black in \fig{fig:matching-reduction}.

For every vertex $v$, create $d(v)-2$ vertices $v_1, \ldots, v_{d(v)-2}$, and add a weight-$\frac12$ edge $(v_i, u\overline{v}_j)$ for every $u\in N(v)$ and $1 \leq i\leq d(v) - 2, j\leq \Delta - 2$;
these edges are blue in \fig{fig:matching-reduction}.
Finally, for each pair of edges $uv$ and $wv$ incident to $v$, create a weight-$1$ edge $(u\overline{v}_i, w\overline{v}_j)$ for every $1 \leq i,j\leq \Delta-2$;
these edges are green in \fig{fig:matching-reduction}.

\begin{theorem}\label{thm:optimalgtoh}
    $G$ has a threading of length $W + m$ with $\max_{uv\in E(G)}x_{uv}\leq \Delta - 1$ if and only if $\hat{H}$ has a perfect matching of weight $W$.
\end{theorem}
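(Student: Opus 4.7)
I would extend the bijection $\varphi$ from Section~\ref{sct:perfectthread} to~$\hat H$. Given a perfect matching $M$ of $\hat{H}$, define $\varphi(M) = \set{x_{uv}}$ by $x_{uv} := 1 + \bigl|\set{(\overline{u}v_i, u\overline{v}_i) : 1 \le i \le \Delta - 2} \setminus M\bigr|$, and let $g_v$ be the number of green edges of $M$ whose both endpoints lie at vertex~$v$. Each ``active'' vertex $u\overline{v}_i$ at~$v$ (one whose black partner is not in $M$) is matched via either a blue edge to some $v_j$ or a green edge to some $w\overline{v}_j$ with $w \ne u$; counting yields $\sum_{u\in N(v)} x_{uv} = d(v) + (d(v) - 2) + 2g_v = 2(d(v)-1) + 2g_v$, which immediately gives \cstr{c1}, \cstr{c2}, \cstr{c4}, and the cap $x_{uv} \le \Delta - 1$. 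Summing over $v$ and halving produces $|T| = 2m - n + \sum_v g_v$, while the matching weight is $W = \tfrac12(2m - 2n) + \sum_v g_v$, so $|T| = W + m$.

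The delicate step in this direction is \cstr{c3}. I would write $x_{uv} - 1 = a_u + \sum_{w \ne u} b_{uw}$, where $a_u$ is the number of $v_j$'s matched to some slot $u\overline{v}_i$ at $v$ and $b_{uw}$ counts green edges at $v$ between tubes $u$ and $w$. Since only $d(v)-2$ many $v_j$'s exist, $a_u \le d(v) - 2$; since each green edge at $v$ is counted at most once in the sum, $\sum_{w \ne u} b_{uw} \le g_v$. Adding gives $x_{uv} - 1 \le (d(v) - 2) + g_v$, which combined with $\sum_w x_{wv} = 2(d(v)-1) + 2g_v$ rearranges to \cstr{c3}.

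For the reverse direction, given a threading $\set{x_{uv}}$ with $x_{uv} \le \Delta - 1$, I would construct a perfect matching of weight $|T| - m$ as follows. For each $uv \in E(G)$, include exactly $\Delta - 1 - x_{uv}$ of the $\Delta - 2$ black edges (valid since $1 \le x_{uv} \le \Delta - 1$), leaving $x_{uv} - 1$ active slots at each endpoint. At each vertex $v$, set $g_v := \tfrac12\bigl(\sum_w x_{wv} - 2(d(v)-1)\bigr) \ge 0$, choose integers $c_u$ with $\max\set{0,\, x_{uv}-1-g_v} \le c_u \le x_{uv}-1$ and $\sum_u c_u = d(v) - 2$, match the $v_j$'s to tube-$u$ slots using $c_u$ blue edges at each $u$, and then green-pair the remaining active slots across tubes.

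The main obstacle is feasibility of the $c_u$'s, which reduces to $\sum_u \max\set{0,\, x_{uv}-1-g_v} \le d(v) - 2$. I would prove this by a case split on $|S|$, where $S := \set{u : x_{uv} - 1 > g_v}$: the case $|S| = 0$ is trivial, $|S| = 1$ uses \cstr{c3} to bound the single contributor by $d(v) - 2$, and $|S| \ge 2$ follows from $\sum_u (x_{uv} - 1) = (d(v)-2) + 2g_v$. The subsequent green-pairing step is feasible because the leftover per-tube degrees sum to $2g_v$ with each bounded by $g_v$, and $\hat{H}$ contains every edge $(u\overline{v}_i, w\overline{v}_j)$ with $u \ne w$, giving full flexibility. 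The weight of the resulting matching is $|T| - m$ by the same accounting as in the forward direction.
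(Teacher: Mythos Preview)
Your proposal is correct. The forward direction matches the paper's argument almost exactly: both decompose the active slots at each vertex into blue and green contributions, verify \cstr{c1}--\cstr{c4} from the counts, and obtain the length--weight relation by the same accounting. Your derivation of \cstr{c3} via $x_{uv}-1 \le (d(v)-2)+g_v$ combined with $\sum_w x_{wv}=2(d(v)-1)+2g_v$ is a mild rephrasing of the paper's $a_v(uv)\le d(v)-2$ and $\sum_{w\ne u}b_v(wv)\ge b_v(uv)$, but both arrive at the same inequality.

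The reverse direction is where you genuinely diverge. The paper constructs the matching by first building the junction graph $J(v)$ (via Lemma~\ref{lemma:junction-graph}), peeling off $k$ non-tree edges to serve as green matches, and then filling in blue edges from the residual tree degrees; the bound $b_v(uv)\le x_{uv}-1$ comes from the fact that every tube retains at least one tree edge. You instead bypass the junction graph entirely and argue feasibility directly: you posit the blue counts $c_u$ in the box $[\max\{0,x_{uv}-1-g_v\},\,x_{uv}-1]$ with prescribed sum $d(v)-2$, verify the box is nonempty via a clean case split on $|S|$ (the $|S|=1$ case using \cstr{c3} and the $|S|\ge 2$ case using $\sum_u(x_{uv}-1)=d(v)-2+2g_v$), and then realize the leftover degrees as a loopless multigraph on the tubes because each leftover is at most $g_v$. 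This is more elementary and self-contained---it does not rely on the earlier junction-graph machinery---whereas the paper's route has the advantage of reusing a construction it already needed for Section~\ref{sec:junction-graph}. Both are complete.
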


To prove Theorem~\ref{thm:optimalgtoh}, we again show how to translate between a threading of $G$ and a perfect matching of $\hat{H}$.
Given a matching $M \subset E(\hat H)$ of $\hat H$, define a possible threading solution $\psi(M) = \set{x_{uv}}$ by taking $x_{uv}$ to be $1$ plus the number of copies of $uv$ not matched in~$M$:
$x_{uv} := 1 + \big|{\set{(\overline{u}v_i, u\overline{v}_i) : 1 \leq i \leq \Delta-2} \setminus M}\big|$.

\begin{claim} \label{lm:matching-to-threading}
    If $M$ is a perfect matching in $\hat{H}$ of weight $W$, then $\psi(M) = \set{x_{uv}}$ is a threading of $G$ of length $W + m$ with $\max_{uv\in E(G)}x_{uv}\leq \Delta - 1$.
\end{claim}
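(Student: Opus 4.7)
The plan is to verify each of \cstr{c1}--\cstr{c4}, the length formula $|T| = W + m$, and the max-count bound $x_{uv} \leq \Delta - 1$ by a local analysis of $M$ around each $v \in V(G)$, exploiting the simple three-type structure of edges in $\hat{H}$ at $v$.

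First I would unpack the matching at $v$. The vertices of $\hat{H}$ incident to the $v$-side partition into the $d(v)-2$ central vertices $v_k$ and, for each neighbor $u$, a \emph{group} of $\Delta-2$ side-vertices $u\overline{v}_1, \ldots, u\overline{v}_{\Delta-2}$. Under the perfect matching $M$, each side-vertex uses exactly one of three edges: black (to its forced partner on $u$'s side), blue (to some $v_k$), or green (to a side-vertex in a different group at $v$). Each central $v_k$ has only blue neighbors, so is forced to use a blue edge. I would introduce $a_u$ and $b_u$ counting side-vertices in the $u$-group matched via blue and green respectively, and $g_v$ for the total number of green edges at $v$. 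The definition of $\psi(M)$ immediately gives $x_{uv} - 1 = a_u + b_u$, and two easy tallies yield $\sum_{u \in N(v)} a_u = d(v) - 2$ (every central vertex is blue-matched exactly once) and $\sum_{u \in N(v)} b_u = 2 g_v$ (every green edge contributes two side-endpoints at $v$).

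With this setup, \cstr{c1} and the $\Delta-1$ bound are immediate from $1 \leq x_{uv} = 1 + a_u + b_u \leq 1 + (\Delta - 2)$. Summing over $u \in N(v)$ gives $\sum_{u \in N(v)} x_{uv} = d(v) + (d(v) - 2) + 2 g_v = 2(d(v)-1) + 2 g_v$, which is even (\cstr{c2}) and at least $2(d(v)-1)$ (\cstr{c4}). For the length, handshaking gives $\sum_{uv \in E} x_{uv} = \tfrac{1}{2}\sum_v [2(d(v)-1) + 2 g_v] = 2m - n + G$, where $G := \sum_v g_v$. Since every central vertex is blue-matched, the total blue-edge count in $M$ is $\sum_v(d(v) - 2) = 2m - 2n$, so $W = \tfrac{1}{2}(2m - 2n) + G = m - n + G$, and hence $\sum x_{uv} = W + m$ as claimed.

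The only nontrivial constraint is \cstr{c3}. I plan to argue it as follows: every green edge at $v$ has its two endpoints in two \emph{distinct} groups, so at most one endpoint lies in the $u$-group, giving $b_u \leq g_v$. Combined with $a_u \leq d(v) - 2$ (a direct consequence of $\sum_u a_u = d(v) - 2$), this yields $x_{uv} - 1 = a_u + b_u \leq (d(v) - 2) + g_v$; substituting into $\sum_{w \neq u} x_{wv} = [2(d(v)-1) + 2g_v] - x_{uv}$ produces $\sum_{w \neq u} x_{wv} \geq x_{uv}$ after a short calculation. I expect \cstr{c3} to be the main conceptual obstacle: it is the only step where the \emph{distinctness} of groups linked by green edges plays an essential role, and so is the only place where the specific geometry of $\hat{H}$ --- as opposed to pure bookkeeping --- is actually used.
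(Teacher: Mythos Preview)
Your proposal is correct and follows essentially the same approach as the paper: you introduce the same blue/green counts $a_u,b_u$ (the paper's $a_v(uv),b_v(uv)$), derive $x_{uv}=1+a_u+b_u$, and verify \cstr{c1}--\cstr{c4} and the length formula by the same local bookkeeping. The only cosmetic difference is in \cstr{c3}, where you bound $b_u\le g_v$ and plug into the total sum, while the paper argues $\sum_{w\neq u}b_v(wv)\ge b_v(uv)$ directly; these are equivalent reformulations of the same ``green edges join distinct groups'' observation.
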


\begin{proof}
By definition of $\psi(M)$, every $x_{uv}$ satisfies $1\leq x_{uv}\leq \Delta - 1$. Thus, $\set{x_{uv}}$ satisfies \cstr{c1} and $\max_{uv \in E(G)} x_{uv}\leq \Delta - 1$.

Let $a_{v}(uv)$ denote the number of vertices $u\overline{v}_i$ (for $1 \leq i \leq \Delta-2$) matched with some vertex $v_j$, i.e., the number of blue edges incident to a vertex $u\overline{v}_i$ that appear in~$M$.
Let $b_{v}(uv)$ denote the number of vertices $u\overline{v}_i$ (for $1 \leq i \leq \Delta-2$) matched with some vertex $w\overline{v}_j$, i.e., the number of green edges incident to a vertex $u\overline{v}_i$ that appear in~$M$.
Any other vertex $u\overline{v}_i$ (not incident to either a blue or green edge in~$M$) must be matched to its corresponding vertex $\overline{u}v_i$, which does not contribute to $x_{uv}$. Hence, $x_{uv} = 1 + a_v(uv) + b_v(uv)$.

Next we prove that $\set{x_{uv}}$ satisfies constraint \cstr{c4}.
For every vertex $v$, we have $\sum_{u\in N(v)} a_v(uv) = d(v) -2$, which implies $\sum_{u\in N(v)}(x_{uv}-1)\geq d(v) -2$, which is equivalent to \cstr{c4}.

Next consider \cstr{c2}. Any edge $(u\overline{v}_i, w\overline{v}_j)$ present in $M$ adds $1$ to both $b_v(uv)$ and $b_v(wv)$, thereby ensuring $\sum_{u\in N(v)}b_v(uv) \equiv 0 {\pmod{2}}$. Consequently,
\[
\sum_{u\in N(v)}x_{uv} \equiv \sum_{u\in N(v)}(a_v(uv)+1)  = 2(d(v) -1) \equiv 0 \pmod 2.
\]

Finally, consider \cstr{c3}.
Given that $a_v(uv)\leq d(v)-2$, we infer $\sum_{w\in N(v)\setminus \set{u}}a_v(uv) + d(v) -1 \geq a_v(uv) + 1$. Additionally, for each vertex contributing to $b_v(uv)$, its matched vertex contributes to some $b_v(wv)$, so $\sum_{w\in N(v)\setminus \set{u}} b_v(wv)\geq b_v(uv)$. Hence, we have
\[
\sum_{w\in N(v)\setminus \set{u}} x_{wv} = \sum_{w\in N(v)\setminus \set{u}} (a_v(wv) + b_v(wv) + 1)\geq (a_v(uv) + 1) + b_v(uv) = x_{uv}.
\]
We conclude that $\set{x_{uv}}$ is a threading of $G$.

Lastly, we compute its length. The weight of $M$ is determined by the number of blue and green edges it contains because the edges $(\overline{u}v_i, u\overline{v}_i)$ have zero weight. Each of its blue edges of the form $(v_i, u\overline{v}_j)$ has weight $\frac12$ and is accounted for once in $a_v(uv)$, for a total weight of $a_v(uv)/2$. Each of its green edges of the form $(u\overline{v}_i, w\overline{v}_j)$ has weight $1$ and is counted twice --- once in $b_v(uv)$ and once more in $b_v(wv)$ --- for a total weight of $b_v(uv)/2$. Hence, the weight $W$ of the matching $M$ is given by
\[
W = \sum_{v\in V}\sum_{u\in N(v)} \left(\frac{a_v(uv)}{2} + \frac{b_v(uv)}{2}\right) = 2 \sum_{uv\in E}\frac{x_{uv}-1}{2} = \sum_{uv\in E} x_{uv}-m.
\]
Therefore $\set{x_{uv}}$ is a threading of $G$ of length $W+m$.

\end{proof}

\begin{claim}
    For every threading $\set{x_{uv}}$ of $G$ such that $\max_{uv \in E(G)} x_{uv}\leq \Delta-1$, $\hat{H}$ has  a perfect matching $M$ such that $\psi(M)= \set{x_{uv}}$.
\end{claim}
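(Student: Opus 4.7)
The plan is to construct $M$ in two phases: a global phase that, for each edge $uv \in E(G)$, chooses how many of its $\Delta - 2$ black copies to include in $M$; and a local phase, carried out independently at each vertex $v$, that matches the remaining loose endpoints using blue and green edges. In the global phase, I include the black edge $(\overline{u}v_i, u\overline{v}_i)$ in $M$ for $i = 1, \ldots, \Delta - 1 - x_{uv}$---non-negative since $x_{uv} \leq \Delta - 1$ by hypothesis---so that exactly $x_{uv} - 1$ copies of $uv$ have their endpoints left ``loose''. Once the local phase matches every loose endpoint via some blue or green edge, $M$ automatically satisfies $\psi(M)_{uv} = x_{uv}$.

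At each vertex $v$, let $y_u := x_{uv} - 1$ be the number of loose $u\overline{v}_i$ vertices. The local task reduces to choosing non-negative integers $a_u \leq y_u$ for $u \in N(v)$ with $\sum_u a_u = d(v) - 2$ (prescribing how many color-$u$ loose ends are matched to the $d(v) - 2$ sinks $v_j$ via blue edges), such that the residuals $b_u := y_u - a_u$ admit a pairing by green edges in which no pair uses two loose ends of the same color $u$. The existence of such a pairing reduces to a standard balanced-multiset fact: a multiset with $b_u$ items of each color can be partitioned into distinct-color pairs iff $B := \sum_u b_u$ is even and $b_u \leq B/2$ for every $u$; this is proved by induction, pairing one item from each of the two largest classes. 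Parity of $B = \sum_u x_{uv} - 2(d(v) - 1)$ follows directly from \cstr{c2}.

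The main obstacle is feasibility of the allocation $\set{a_u}$. The constraint $b_u \leq B/2$ rearranges to $a_u \geq \max(0, y_u - B/2)$, and I would show $\sum_u \max(0, y_u - B/2) \leq d(v) - 2$ by case analysis on $S := \set{u \in N(v) : y_u > B/2}$: for $|S| \geq 2$ it is immediate from $\sum_{u \in S}(y_u - B/2) \leq \sum_u y_u - |S| \cdot B/2 = (d(v) - 2) + B(1 - |S|/2) \leq d(v) - 2$; for $|S| = 1$ with $S = \set{u^*}$, the required inequality $y_{u^*} \leq (d(v) - 2) + B/2$ is exactly a rearrangement of \cstr{c3} at $v$ (this is the tight case); and $|S| = 0$ is trivial. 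Meanwhile \cstr{c4} supplies $\sum_u y_u \geq d(v) - 2$, giving enough slack to raise the $a_u$'s (each capped at $y_u$) so that they sum to exactly $d(v) - 2$. Given the $a_u$'s, I would assign the sinks $v_j$ arbitrarily to loose ends respecting these quotas (blue edges), apply the balanced-multiset fact to pair the remaining loose ends (green edges), and repeat independently at every vertex; the resulting $M$ is a perfect matching with $\psi(M) = \set{x_{uv}}$ by design.
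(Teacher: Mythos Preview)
Your proof is correct. The overall two-phase structure---fix $\Delta-1-x_{uv}$ black edges per edge, then locally match the remaining $x_{uv}-1$ loose endpoints at each vertex using blue and green edges---matches the paper's argument exactly. The difference lies in how the local phase is carried out. The paper invokes the junction-graph construction (Lemma~\ref{lemma:junction-graph}): it builds $J(v)$, removes $k$ edges to leave a spanning tree, and lets those $k$ removed edges dictate the green matches; the spanning-tree property guarantees each $b_v(uv)\le x_{uv}-1$, and the blue allocation $a_v(uv)=x_{uv}-b_v(uv)-1$ then automatically sums to $d(v)-2$. Your route is more elementary and self-contained: you bypass the junction graph entirely, reduce the green-edge feasibility to the balanced-multiset pairing criterion, and verify the required lower bounds on the $a_u$'s by a short case analysis on $|S|$, where the $|S|=1$ case is precisely \cstr{c3}. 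The paper's version is slicker because it reuses machinery already developed, while yours exposes more directly which constraints (\cstr{c2} for parity, \cstr{c3} for the tight case, \cstr{c4} for total slack) are doing the work.
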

\begin{proof}
Let $\set{x_{uv}}$ be a threading of $G$ satisfying $x_{uv}\leq \Delta - 1$ for every edge $uv\in E$. Recall Lemma~\ref{lemma:junction-graph}, where we demonstrate the construction of a junction graph $J(v)$ for vertex $v$.

For every vertex $v \in V$, we know by \cstr{c2} and \cstr{c4} that $\sum_{u\in N(v)}x_{uv} = 2(d(v)-1) + 2k$ for some integer $k$. Note that $J(v)$ has $d(v)$ vertices and $d(v) -1 + k$ edges. Because $J(v)$ is connected, we can thus select $k$ edges from $J(v)$ such that removing them will leave behind a tree. Denote these edges by $(u^1, w^1),\ldots, (u^k, w^k)$ where $u^1, \ldots, u^k, w^1, \ldots, w^k \in N(v)$. For each edge $(u^\ell, w^\ell)$, match a green edge of the form $(u^\ell\overline{v}_i, w^\ell \overline{v}_j)$. For every edge $uv$ connected to $v$, denote by $b_v(uv)$ the number of vertices of the form $u\overline{v}_i$ currently matched, i.e., the number of times $u$ appears as an endpoint among the $k$ edges selected from $J(v)$. 

Because the edges remaining in $J(v)$ after removing $(u^1, w^1),\ldots, (u^k, w^k)$ form a tree, every neighbor of $v$ must have at least one incident edge in $J(v)$ that is \textit{not} selected. Because the degree of $t_{uv}$ in $J(v)$ is $x_{uv}$, the number of matched vertices must satisfy $b_v(uv) \leq x_{uv} - 1$.\footnote{Here $t_{uv}$ is a vertex representing the tube $uv$. See the notation in Section \ref{sec:junction-graph}.}

For each $u\in N(v)$, let $a_{v}(uv) = x_{uv} - b_v(uv) - 1$. It is clear from our above observation that $a_v(uv) \geq 0$. Given $\sum_{u\in N(v)}b_v(uv) = 2k$, we have $\sum_{u\in N(v)}a_{v}(uv) = d(v) - 2$. It follows that we can match $a_{v}(uv)$ vertices in $u\overline{v}_1, \ldots, u\overline{v}_{\Delta - 2}$ to an equal number of vertices in $v_1, \ldots, v_{d(v)-2}$ using blue edges. After executing this procedure, all vertices of the form $v_1, \ldots, v_{d(v)-2}$ will have been matched. Furthermore, the number of matched vertices of the form $u\overline{v}_i$ is exactly $a_v(uv) + b_v(uv) = x_{uv} - 1$. We repeat this procedure for all vertices.

Now, for every edge $uv$, there are two sets of unmatched vertices, each of size $\Delta - 2 - (x_{uv} - 1) = \Delta - x_{uv} - 1$  $u\overline{v}_i$, of the form $u\overline{v}_i$ and $\overline{u}v_j$, respectively. By rearranging the existing matches, we can ensure these vertices are exactly $u\overline{v}_1, \ldots, u\overline{v}_{\Delta - x_{uv} - 1}, \overline{u}v_1, \ldots, \overline{u}v_{\Delta - x_{uv}-1}$. Then we can proceed to match every pair $(u\overline{v}_i, \overline{u}v_i)$, for $i\leq \Delta - x_{uv} - 1$, using a black edge.

The above process results in a perfect matching $M$ from the threading $\set{x_{uv}}$. The number of edges of the form $(u\overline{v}_i, \overline{u}v_i)$ included in the matching is precisely $\Delta - x_{uv} - 1$. Hence, $\psi(M) = \set{x_{uv}}$.
\end{proof}

The above two claims complete the proof of Theorem~\ref{thm:optimalgtoh}.
Lemma~\ref{lm:dmax} establishes that an optimal threading visits an edge no more than $\Delta - 1$ times, so $\hat{H}$ must have a perfect matching. Furthermore, if $M$ is the min-weight perfect matching of $\hat{H}$, then $\psi(M)$ is the optimal threading of $G$. We can therefore find the optimal threading of $G$ by finding the min-weight perfect matching of $\hat{H}$ and applying the reduction of Claim~\ref{lm:matching-to-threading}. 

Note that the solution presented in this section can be readily adapted to address a constrained variant of \threading, where each edge is allowed to be traversed only a limited number of times by imposing limits on the number of vertex and edge copies created during the construction of $\hat{H}$. This scenario arises, for example, when dealing with tubes of restricted diameter.

\subsubsection{Running-Time Analysis}
\label{app:optimal-runtime}

First, let us analyze the size of $\hat{H}$: the graph contains $\Delta - 2$ vertices for each vertex $v\in V(G)$ and $2(\Delta - 2)$ vertices for each edge $uv\in E(G)$. Hence, the total number of vertices in $\hat{H}$ is $O(m\Delta)$. In terms of edges, $\hat{H}$ includes $ \Delta -2$ edges for each edge $uv\in E(G)$ and no more than $\Delta^4$ edges for each vertex $v\in V(G)$. Therefore, the total edge count in $\hat{H}$ is $O(n\Delta ^4)$. As a result, the construction of $\hat{H}$ requires $O(m\Delta + n\Delta^4)$ time.

Next, we use the algorithm of Galil, Mical, and Gabow~\cite{minweightperfectmatching} to find a minimum weight perfect matching of $\hat{H}$. This algorithm has time complexity $O(nm \log n)$, and so on $\hat{H}$ it runs in time 
\[
O(|V(H)||E(H)|\log (|V(H)|)) = O(m\Delta\cdot n\Delta^4 \cdot \log (m\Delta)) = O(nm\cdot \Delta^5\log n).
\]
As this term dominates the time for constructing $\hat{H}$, we conclude that our algorithm for \threading~runs in time $O(nm \cdot \Delta^5 \log n)$.

\subsubsection{Extension to Weighted Graphs}
\label{sec:weighted}

In this section, we adapt our \threading\ algorithm to weighted graphs representing structures whose edges have varying lengths. Specifically, we introduce a weight function $\ell: E \to \mathbb{R}^{+}$, where $\ell(e)$ represents the length of tube $e$. The goal of \threading\ is now to minimize the \defn{total length} of a threading $T$, defined as $\sum_{e \in T} \ell (e)$. This problem is equivalent to the weighted version of \ip\ where we seek to minimize $\sum_{e\in E} \ell (e) \, x_e$ subject to constraints \cstr{c1}--\cstr{c4}. 

Our \threading\ algorithm hinges upon Lemma~\ref{lm:dmax}. Fortunately, this result holds for weighted graphs. In the proof of the lemma, we demonstrated that if any threading $\set{x_e}$ has $x_e \geq \Delta$ for some $e \in E$, then we can construct a strictly shorter threading $\set{x'_e}$ that remains consistent with constraints \cstr{c1}--\cstr{c4}. Specifically, $x'_e \leq x_e$ for all $e \in E$ and $x'_e < x_e$ for at least one $e \in E$. Therefore, even in the weighted case we have $\sum_{e\in E} \ell(e) \, x'_e < \sum_{e\in E} \ell(e) \, x_e$ for any weight function $\ell: E \to \mathbb{R}^{+}$. Hence, an optimal threading never traverses an edge more than $\Delta - 1$ times as desired. 

To adapt our \threading\ algorithm for to the weighted scenario, we construct a graph similar to $\hat{H}$ in \sect{sct:optthreading}, but with modified edge weights: a blue edge $(v_i, u\bar{v}_j)$ now has weight $\frac{1}{2}\ell(uv)$ instead of weight $\frac{1}{2}$, and a green edge $(u\bar{v}_i, w\bar{v}_j)$ has weight $\frac{1}{2} \big(\ell(uv) + \ell(wv)\big)$ rather than weight $1$. The black edges continue to have zero weight.
Denote this new graph by $\Tilde{H}$.

By a similar proof to that of \autoref{thm:optimalgtoh}, we obtain a reduction from weighted \threading\ to minimum-weight perfect matching:

\begin{theorem}\label{thm:weightoptimalgtoh}
    $G$ has a threading of length $W + \sum_{e\in E(G)}\ell(e)$ with $\max_{e\in E(G)}x_{e}\leq \Delta - 1$ if and only if $\Tilde{H}$ has a perfect matching of weight $W$.
\end{theorem}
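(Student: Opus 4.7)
The plan is to mirror the proof of Theorem~\ref{thm:optimalgtoh} almost verbatim, reusing the same bijection $\psi$ between perfect matchings of $\Tilde{H}$ and threadings of $G$ with $\max_{e\in E(G)} x_e \leq \Delta-1$. Since $\Tilde{H}$ and $\hat{H}$ coincide as unweighted graphs---only the edge weights differ---every combinatorial verification in that proof (in particular, constraints \cstr{c1}--\cstr{c4} and the construction of $M$ from $\{x_{uv}\}$) carries over unchanged. The only thing that must be redone is the length/weight accounting: I need to check that the new weights make the weight of $M$ equal to $\sum_{uv\in E(G)} \ell(uv)(x_{uv}-1)$, which is equivalent to the statement of the theorem after adding $\sum_{uv}\ell(uv)$ to both sides.

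For the forward direction, I would take a perfect matching $M$ of $\Tilde{H}$ of weight $W$, set $\psi(M)=\{x_{uv}\}$ as in Claim~\ref{lm:matching-to-threading}, and reuse the counts $a_v(uv),\,b_v(uv)$ for matched blue and green edges incident to the $u\bar v_i$ vertices at $v$, so that $a_v(uv)+b_v(uv)=x_{uv}-1$. Gathering the weight vertex by vertex: each matched blue edge $(v_i,u\bar v_j)$ contributes $\tfrac{1}{2}\ell(uv)$ to the $uv$-tally at $v$, and each matched green edge $(u\bar v_i, w\bar v_j)$ at $v$ contributes $\tfrac{1}{2}\ell(uv)$ to the $uv$-tally at $v$ and $\tfrac{1}{2}\ell(wv)$ to the $wv$-tally at $v$. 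Summing over all $v$ and $u\in N(v)$ yields
\[
W \;=\; \sum_{v\in V}\sum_{u\in N(v)} \tfrac{1}{2}\,\ell(uv)\bigl(a_v(uv)+b_v(uv)\bigr) \;=\; \sum_{uv\in E(G)} \ell(uv)(x_{uv}-1),
\]
where the second equality uses that each edge is counted once at $v$ and once at $u$, with $a_v(uv)+b_v(uv)=a_u(uv)+b_u(uv)=x_{uv}-1$.

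For the reverse direction, I would start from a threading $\{x_{uv}\}$ with $x_{uv}\le\Delta-1$ and apply the construction of \sect{sct:optthreading} verbatim: build $J(v)$ via Lemma~\ref{lemma:junction-graph}, represent its $k$ cycle-breaking edges by green matching edges, pair the remaining $u\bar v_i$'s with $v_j$'s via blue edges, and cover the leftovers with black edges. This is purely combinatorial and yields a perfect matching $M$ with $\psi(M)=\{x_{uv}\}$ regardless of weights, and by the preceding computation this $M$ has weight $\sum_{uv}\ell(uv)x_{uv}-\sum_{uv}\ell(uv)$.

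The main point requiring care---really just bookkeeping, not a genuine obstacle---is the choice of the $\tfrac{1}{2}$ factors on blue and green edges. The factor on green edges absorbs the fact that each matched green edge at $v$ increments two different $b_v(\cdot)$ tallies simultaneously; the factor on blue edges absorbs the fact that each edge $uv\in E(G)$ is double-counted across its two endpoints $u$ and $v$. With these factors in place the accounting telescopes cleanly to $\ell(uv)(x_{uv}-1)$ per edge, and no new idea beyond the original proof of Theorem~\ref{thm:optimalgtoh} is required.
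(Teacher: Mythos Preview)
Your proposal is correct and follows exactly the paper's own approach: the paper also proves Theorem~\ref{thm:weightoptimalgtoh} by invoking the proof of Theorem~\ref{thm:optimalgtoh} verbatim for the combinatorial bijection and only redoing the weight bookkeeping, arguing that each unmatched black edge forces its two endpoints to be matched by blue or green edges, each contributing $\tfrac{1}{2}\ell(uv)$. Your accounting via $a_v(uv)+b_v(uv)=x_{uv}-1$ is in fact more explicit than what the paper writes.
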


As before, an edge $uv$ traversed by a threading corresponds to an edge $(u\bar{v}_i,\bar{u}v_i)$ that is \emph{not} part of the perfect matching of $\Tilde{H}$. Both endpoints of this edge must be matched with either a green or blue edge. Each such matching contributes $\frac{\ell(uv)}{2}$ to the matching's total weight. Thus, we can show that a perfect matching in $\Tilde{H}$ with weight $W$ corresponds to a threading of $G$ of length $W + \sum_{e\in E} \ell(e)$.

\section{Special Cases}
\label{sec:special}

Here we focus on two scenarios: \threading~on cubic graphs and \double, where each edge can be traversed at most twice. 

\subsection{Cubic Graphs}\label{sec:cubic-graphs}

If graph $G$ is cubic, then by Lemma~\ref{lm:dmax}, an optimal threading of $G$ visits each edge at most twice. Furthermore, in a perfect threading of $G$, if it exists, exactly one edge incident to each vertex is double-threaded due to constraint \cfourstar. Hence, it follows that $G$ has a perfect threading if and only if $G$ has a perfect matching. A perfect matching of $G$ gives the set of edges to be double-threaded in a perfect threading. Every bridgeless cubic graph has a perfect matching~\cite{cubic-bridgeless}---it can be computed in $O(n\log^4{n})$ time~\cite{matching-cubic}. In fact, if all bridges of a connected cubic graph $G$ lie on a single path of $G$, then $G$ has a perfect matching~\cite{errera}.

\subsection{Double Threading}\label{sec:double-threading}

In \double, the goal is to minimize the number of double-threaded edges or, equivalently, to maximize the number of edges visited only once. A solution to \double~on a cubic graph also solves \threading~on the same graph. This is due to the observation that either zero or two single-threaded edges are incident to each vertex in a solution to \double, which aligns with the reality of \threading~on cubic graphs. By the same observation, a solution to \double~matches the upper bound given in Lemma~\ref{lemma:cycle-bound} for general graphs. We further note that \double~may be reduced to the task of finding vertex-disjoint cycles simple with maximum collective length, which we solve below in Algorithm~\ref{alg:vertex-disjoint-cycles}.

\begin{algorithm}
\caption{Maximum Length Vertex-Disjoint Simple Cycles}
\label{alg:vertex-disjoint-cycles}
\begin{enumerate}
    \item
    Construct a weighted graph $G'$ from $G$
    (Figure~\ref{fig:double-threading-reduction}):
    \begin{enumerate}
        \item For each vertex $v \in V$, create a complete bipartite graph $G_u = K_{d(v),d(v)}$ with zero-weight edges. Let $D_u^-$ and $D_u^+$ denote the two disjoint vertex sets of this graph.
        \item For each edge $uv \in E$, add an edge unit weight between a vertex of $D_u^+$ and a vertex of $D_v^+$ such that each vertex of $D_u^+$ and $D_v^+$ has exactly one edge incident to it.
        \item 
        \label{step:edge}
        For each subgraph $G_v$, add a zero-weight edge between any two vertices of $D_v^-$.
    \end{enumerate}
    \item 
    \label{step:matching}
    Compute a maximum weight perfect matching $M$ in $G'$.
    \item Return edge set $S \subset E$ of $G$ corresponding to the weighted edges of $M$. 
\end{enumerate}
\end{algorithm}
\begin{figure*}[ht]
\centering
  \includegraphics[width=\linewidth]{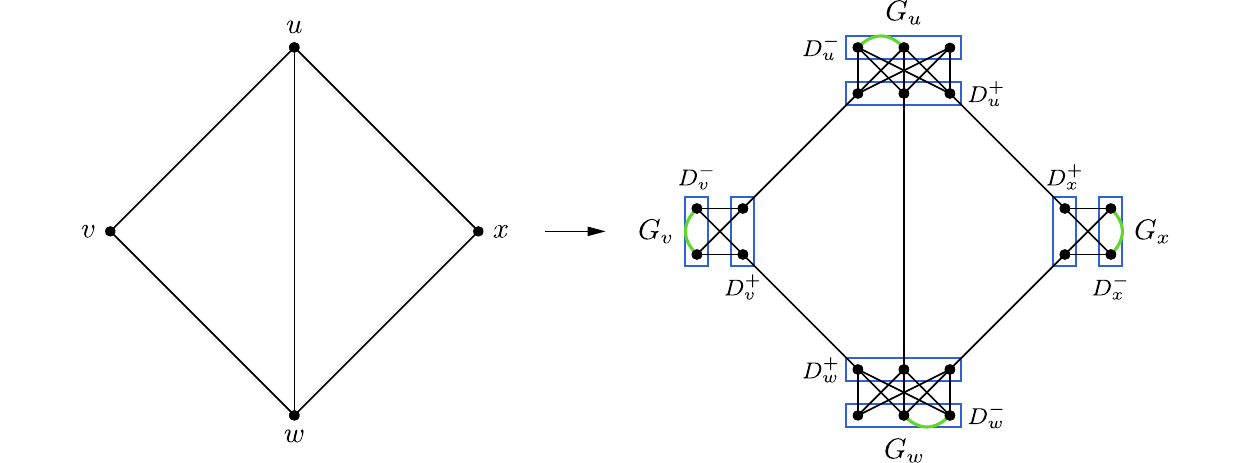}
  \caption{\label{fig:double-threading-reduction}Illustration of constructing $G'$ from $G$.}
\end{figure*}

We sketch the intuition behind why matching $M$ corresponds one-to-one to vertex disjoint simple cycles in $G$. Observe two cases for each vertex $u \in V(G)$: (i) If $M$ contains the edge of \ref{step:edge}, then $d(u)-2$ vertices in $D_{u}^{-}$ match with vertices in $D_{u}^{+}$, leaving two vertices in $D_{u}^{+}$ to match with their neighbors in adjacent subgraphs; (ii) all vertices in $D_{u}^{+}$ are saturated via connections to $D_{u}^{-}$. Hence, each vertex $u$ is in exactly one cycle (i) or none at all (ii).

\subparagraph*{Running-Time Analysis:}
We begin our analysis of the running time of Algorithm~\ref{alg:vertex-disjoint-cycles} by first bounding the size of $G'$. Each subgraph $G_v$ has $2d(v)$ vertices and $d(v)^2 + 1$ edges, and these subgraphs are connected via $m$ edges. Because $\sum_{v \in V} d(v) = 2m$ and $\sum_{v \in V} d(v)^2 \leq m(2m/(n-1)+n-2)$~\cite{DECAEN1998245}, we conclude that $V(G') = O(m)$ and $E(G') = O(nm)$. 

The problem of finding a max-weight perfect matching and a min-weight perfect matching are symmetric: we can multiply edge weights by $-1$ to switch between the two problems. It follows that we can apply the min-weight perfect matching algorithm proposed by Galil, Micali, and Gabow~\cite{minweightperfectmatching} in Step~\ref{step:matching} of our algorithm. This procedure runs in $O(|V(G')||E(G')| \log|V(G')|) = O(nm^2 \log n)$ time, which dominates the $O(nm)$ construction time of $G'$ in the first step. Hence, the overall running time of Algorithm~\ref{alg:vertex-disjoint-cycles} is $O(nm^2 \log n)$.
\section{Future Work}

Potential avenues for future work include developing tighter upper and lower bounds based on properties of the input graph and devising a more efficient solution to the general problem.

Practical challenges associated with the design of reconfigurable structures (Figure~\ref{fig:deployable-structure}) inspire further intriguing problems. For instance, friction plays a central role in the deployability of such structures --- it determines the force required to draw the string through the system. According to the Capstan equation, friction increases exponentially with the sum of the absolute values of turning angles in the threading route. Therefore, a logical next step is to investigate a variant of \threading\ where the focus is on minimizing this frictional cost instead of the threading length.

\bibliography{bib}

\begin{thebibliography}{10}

\bibitem{matching-cubic}
Therese~C. Biedl, Prosenjit Bose, Erik~D. Demaine, and Anna Lubiw.
\newblock Efficient algorithms for {P}etersen's matching theorem.
\newblock {\em Journal of Algorithms}, 38(1):110--134, 2001.

\bibitem{eulerian}
J.~A. (John~Adrian) Bondy.
\newblock {\em Graph Theory with Applications}.
\newblock North Holland, New York, 1980--1976.

\bibitem{Jeffrey}
Jeffrey Bosboom, Charlotte Chen, Lily Chung, Spencer Compton, Michael Coulombe,
  Erik~D. Demaine, Martin~L. Demaine, Ivan Tadeu Ferreira~Antunes Filho, Dylan
  Hendrickson, Adam Hesterberg, Calvin Hsu, William Hu, Oliver Korten, Zhezheng
  Luo, and Lillian Zhang.
\newblock Edge matching with inequalities, triangles, unknown shape, and two
  players.
\newblock {\em Journal of Information Processing}, 28:987--1007, 2020.
\newblock \href {https://doi.org/10.2197/ipsjjip.28.987}
  {\path{doi:10.2197/ipsjjip.28.987}}.

\bibitem{Finnish_2013}
Carina Chela.
\newblock The original {F}innish {C}hristmas ornament.
\newblock {\em this is FINLAND}, Dec 2013.
\newblock URL:
  \url{https://finland.fi/christmas/the-original-finnish-christmas-ornament/}.

\bibitem{chibanishizeki}
Norishige Chiba and Takao Nishizeki.
\newblock Arboricity and subgraph listing algorithms.
\newblock {\em SIAM Journal on Computing}, 14(1):210--223, 1985.
\newblock \href {https://doi.org/10.1137/0214017} {\path{doi:10.1137/0214017}}.

\bibitem{cubic-bridgeless}
Maria Chudnovsky and Paul Seymour.
\newblock Perfect matchings in planar cubic graphs.
\newblock {\em Combinatorica}, 32(4):403--424, 2012.
\newblock \href {https://doi.org/10.1007/s00493-012-2660-9}
  {\path{doi:10.1007/s00493-012-2660-9}}.

\bibitem{DECAEN1998245}
D.~{de Caen}.
\newblock An upper bound on the sum of squares of degrees in a graph.
\newblock {\em Discrete Mathematics}, 185(1):245--248, 1998.

\bibitem{errera}
Alfred Errera.
\newblock Du colorage des cartes.
\newblock {\em Mathesis}, 36:56–60, 1922.

\bibitem{eulerian-alg}
Herbert Fleischner.
\newblock {\em Eulerian graphs and related topics}.
\newblock North-Holland, Amsterdam, 1990.

\bibitem{minweightperfectmatching}
Zvi Galil, Silvio Micali, and Harold Gabow.
\newblock An {$O(EV\log V)$} algorithm for finding a maximal weighted matching
  in general graphs.
\newblock {\em SIAM J. Comput.}, 15:120--130, Feb 1986.
\newblock \href {https://doi.org/10.1137/0215009} {\path{doi:10.1137/0215009}}.

\bibitem{Green_2018}
James Green.
\newblock Beadwork in the arts of {A}frica and beyond.
\newblock {\em The Metropolitan Museum of Art}, Jul 2018.
\newblock URL:
  \url{https://www.metmuseum.org/blogs/collection-insights/2018/beadwork-in-arts-of-africa-and-beyond}.

\bibitem{beady}
Yuki Igarashi, Takeo Igarashi, and Jun Mitani.
\newblock Beady: Interactive beadwork design and construction.
\newblock {\em ACM Trans. Graph.}, 31(4), Jul 2012.
\newblock \href {https://doi.org/10.1145/2185520.2185545}
  {\path{doi:10.1145/2185520.2185545}}.

\bibitem{Jackson_2021}
Joelle Jackson.
\newblock Heavenly harmony: The universal language of {F}innish himmeli.
\newblock {\em Smithsonian Center for Folklife and Cultural Heritage}, Jul
  2021.
\newblock URL:
  \url{https://folklife.si.edu/magazine/eija-koski-finnish-himmeli}.

\bibitem{Jin2019BeadSA}
Bih-Yaw Jin and Chiachin Tsoo.
\newblock Bead sculptures and bead-chain interlocking puzzles inspired by
  molecules and nanoscale structure.
\newblock 2019.
\newblock URL: \url{https://api.semanticscholar.org/CorpusID:219636992}.

\bibitem{Martin_2021}
Alison Martin.
\newblock Optimization of threading paths.
\newblock {\em Twitter}, Nov 2021.
\newblock URL:
  \url{https://twitter.com/alisonmartin57/status/1461643652946698240}.

\bibitem{perfectmatching}
Silvio Micali and Vijay~V. Vazirani.
\newblock An {$O(\sqrt{|v|} \cdot |E|)$} algoithm for finding maximum matching
  in general graphs.
\newblock In {\em 21st Annual Symposium on Foundations of Computer Science
  (sfcs 1980)}, pages 17--27, 1980.
\newblock \href {https://doi.org/10.1109/SFCS.1980.12}
  {\path{doi:10.1109/SFCS.1980.12}}.

\bibitem{push-puppets}
Rodakis.
\newblock Push puppet.
\newblock URL: \url{https://rodakis.com/Push-Puppet}.

\bibitem{pajaki-nyt}
Saskia Solomon.
\newblock A vanishing craft reappears.
\newblock {\em The New York Times}, Sep 2022.
\newblock URL:
  \url{https://www.nytimes.com/2022/09/28/style/a-vanishing-craft-reappears.html}.

\bibitem{straw-wiki}
Wikipedia.
\newblock Straw mobile, Apr 2023.
\newblock URL: \url{https://en.wikipedia.org/wiki/Straw_mobile}.

\end{thebibliography}

\end{document}